\newcommand{\ie}{\textit{i.e., }}
\newcommand{\eg}{\textit{e.g., }}
\newcommand{\R}{\mathbb R}    
\newcommand\eqnumber{\addtocounter{equation}{1}\tag{\theequation}}
\newcommand{\E}{\mathbb{E}}
\newtheorem{theorem}{Theorem}
\newtheorem{remark}[theorem]{Remark}
\newcommand{\xjoint}{X} 
\newcommand{\ii}[2]{^{#1}_{#2}} 
\newcommand{\infoi}{X^{\text{obs}}} 
\newcommand{\barrierf}{\phi} 
\newcommand{\safeset}{\mathcal{C}} 
\newcommand{\numbarrierf}{B} 
\newcommand{\safeevent}{S} 
\newcommand{\unsafethreshold}{\epsilon}
\newcommand{\probof}{\mathbb{P}} 
\newcommand{\safeprob}{\mathbf{\Psi}} 
\newcommand{\algfunction}{\alpha} 
\newcommand{\expof}{\mathbb{E}} 
\newcommand{\availableinfo}{I} 
\newtheorem{definition}{Definition}
\newcommand{\rev}[1]{#1}            
\def\BibTeX{{\rm B\kern-.05em{\sc i\kern-.025em b}\kern-.08em
    T\kern-.1667em\lower.7ex\hbox{E}\kern-.125emX}}
\begin{document}
\title{Safe Driving in Occluded Environments}
\author{Zhuoyuan Wang$^{1}$, Tongyao Jia$^{1}$, Pharuj Rajborirug$^{1,2}$,
Neeraj Ramesh$^{1}$, Hiroyuki Okuda$^{3}$, Tatsuya Suzuki$^{3}$, Soummya Kar$^{1}$, Yorie Nakahira$^{1*}$
\thanks{$^{1}$Zhuoyuan Wang, Tongyao Jia, Pharuj Rajborirug, Neeraj Ramesh, Soummya Kar, and Yorie Nakahira are with the Department of Electrical and Computer Engineering, Carnegie Mellon Universty,
        {\tt\small \{zhuoyuaw,tongyaoj,prajbori,neerajr,soummyakgi, \\
        ynakahir\}@andrew.cmu.edu}.}%
\thanks{$^{2}$Pharuj Rajborirug is also with the Faculty of Medicine, King Mongkut's Institute of Technology Ladkrabang, Thailand,
        {\tt\small pharuj.ra@kmitl.ac.th}.}
\thanks{$^{3}$Hiroyuki Okuda and Tatsuya Suzuki are with the Department of Mechanical Systems Engineering, Nagoya University, Japan,
        {\tt\small \{h\_okuda, t\_suzuki\}@nuem.nagoya-u.ac.jp}.}%
\thanks{$*$To whom correspondence should be addressed.}}


\maketitle

\begin{abstract}
Ensuring safe autonomous driving in the presence of occlusions poses a significant challenge in its policy design. 
While existing model-driven control techniques based on set invariance can handle visible risks, occlusions create latent risks in which safety-critical states are not observable. Data-driven techniques also struggle to handle latent risks because direct mappings from risk-critical objects in sensor inputs to safe actions cannot be learned without visible risk-critical objects. 
Motivated by these challenges, in this paper, we propose a probabilistic safety certificate for latent risk. Our key technical enabler is the application of probabilistic invariance: It relaxes the strict observability requirements imposed by set-invariance methods that demand the knowledge of risk-critical states.
The proposed techniques provide linear action constraints that confine the latent risk probability within tolerance. Such constraints can be integrated into model predictive controllers or embedded in data-driven policies to mitigate latent risks. 
The proposed method is tested using the CARLA simulator and compared with a few existing techniques. The theoretical and empirical analysis jointly demonstrate that the proposed methods assure long-term safety in real-time control in occluded environments without being overly conservative and with transparency to exposed risks.

\end{abstract}

\begin{IEEEkeywords}
Autonomous driving, safe control, occlusions, latent risks. 
\end{IEEEkeywords}

\section{Introduction}
\label{sec:introduction}
\IEEEPARstart{V}{isual} 
occlusions impose significant challenges in the policy design of autonomous driving. Most sensors cannot see through opaque objects, and there can be large unobserved regions and various latent risks~\cite{poncelet2020safe, gilroy2019overcoming,gangadhar2023occlusion}. The stochastic nature of road users---such as other vehicles and pedestrians---further complicates the problem~\cite{zhang2021safe,kocc2021pedestrian}. Given such uncertainties, avoiding all latent risk objects in the worst case may not be feasible, or such policies can significantly compromise performance due to their overly conservative nature. Accounting for latent risks in the long term often requires computations that can be prohibitive for real-time control or onboard resources.  

    

\begin{figure}[t]
    \centering
    \includegraphics[width=0.9\linewidth]{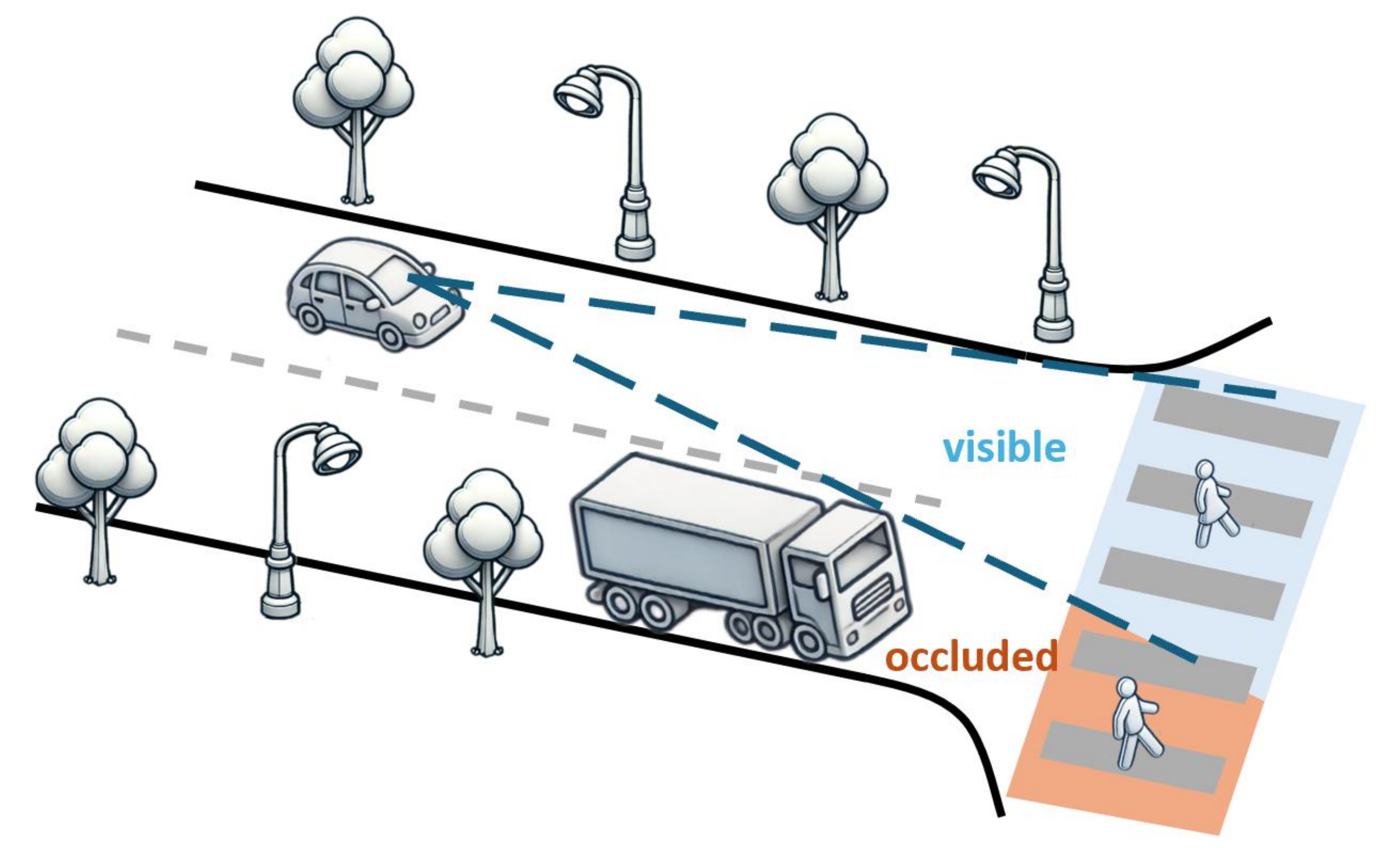}
    \vspace{-2mm}
    \caption{Visually occluded intersection scenario of interest.}
    \label{fig:intersection}
\end{figure}

In this paper, we study the safe navigation of autonomous vehicles in occluded environments. Examples of such scenarios are visually occluded intersections, illustrated in Fig.~\ref{fig:intersection}. We propose a probabilistic safety certificate that controls latent risk (risk imposed by potential road users and objects behind occlusions) within tolerance. The key technique used in the safety certificate is a novel notion of probabilistic invariance. It relaxes the required information in the set invariance approach~\cite{blanchini1999set} while preserving its computational efficiency. It also gives linear constraints on actions, which can be solved efficiently in optimization-based or model-predictive controllers. 
The technical merits of the proposed method are
\begin{enumerate}
    \item Guaranteed long-term assurance against latent risks arising from invisible or occluded objects (Theorem~\ref{thm:1}).
    \item Avoid over-conservatism and achieve better safety and performance trade-offs over existing methods (Fig. 5, Table~\ref{tab:emp_safe_prob}). 
    \item Transparency in design and to the exposed risks (see the proposed optimization-based controller~\eqref{eq:safe_control_optimization}, Fig.~\ref{fig:heat_map_psi}, Remark~\ref{rmk:safe prob meaning} and Remark~\ref{rmk:ease of design}).
\end{enumerate}

The rest of the paper is organized as follows. Section~\ref{sec:related works} discusses the related works. Section~\ref{sec:problem_formulation} formulates the safe control problem of interest. Section~\ref{sec:proposed method} introduces the proposed occlusion-aware control framework. Section~\ref{sec:experiments} presents CARLA simulations to validate the proposed method. Finally, Section~\ref{sec:conclusion} concludes the paper.

\section{Existing techniques for driving in occluded environments}
\label{sec:related works}

Many techniques have been developed for the detection of occlusions and risk-critical road users and occlusion detection (see~\cite{8928542} for a review of these techniques). Some existing literature has also focused on avoiding occlusions by improving the transportation infrastructure (e.g., building a vehicle-to-infrastructure network) or finding a path with greater visibility (e.g., changing lanes to increase visibility~\cite{narksri2022occlusion}).

This paper focuses on decision-making for vehicles in occluded environments when occlusion may not be avoidable. Below we summarize existing approaches to design and learning decision policies.
Partially observable
Markov decision process (POMDP) is used in~\cite{schratter2019pedestrian,thornton2018autonomous} to model the vehicle control problem under occlusions, and an optimal policy is solved given an utility function to maximize. 
Similarly,~\cite{hubmann2019pomdp} characterizes the existence of latent risk under occlusion via belief states in a POMDP, and~\cite{zhang2021improved,zhao2024inference} incorporate 'phantom objects' that describe potential road users behind occlusions in the POMDP setting. Local path planning, trajectory planning, and speed planning are used to avoid collisions in the presence of occlusion~\cite{wang2023occlusion}. 
In~\cite{trauth2023toward}, the likelihood of occlusions and collision probabilities are taken into account by a sampling-based trajectory planner to minimize risk. 
In~\cite{yu2019occlusion}, the authors use road layouts to
forecast and quantify risk associated with occlusions, and incorporate low-level and high-level planning algorithms to generate safe trajectories. However, these techniques are often computationally expensive and require powerful onboard computers. On the other hand, stochastic model predictive control (MPC) is used in~\cite{li2024smpc, chen2023drive, maki2022stochastic} to enforce probabilistic safety constraints under occlusions, with explicitly modeled risk or pedestrian occurrence distribution online. Tube-based and occlusion-aware MPCs are used to ensure the safety and feasibility of ego vehicles under occlusion by computing reachable sets of controllers~\cite{firoozi2022occlusion, farrokhsiar2013integrated}. As these techniques only assure safety/performance during the MPC outlook time horizon, such techniques are subject to stringent tradeoffs between the time horizon to assure safety vs. computation burden (latency in real-time control). In contrast, the proposed approach leads to linear constraints and can be integrated into a quadratic program for efficient real-time control.

Many learning-based methods have also been developed for driving. Some techniques learn humans' risk perception which can be used for the design of driving algorithms. For example, Driver’s Risk Field (DRF) model, which represents a driver’s belief about event probabilities and perceived risk, was used to explain human-like driving behaviors~\cite{kolekar2020human}. Other techniques learn driving policies from data using imitation learning and reinforcement learning. Specifically, behavior cloning from human expert drivers via deep learning is studied in~\cite{sama2020extracting} to account for risk from occlusions. 
Deep reinforcement learning is used in~\cite{isele2018navigating,bouton2019safe,kamran2020risk} to solve for vehicle controls at occluded intersections. However, these methods require extensive data covering all possible situations. Moreover, it is challenging to establish a direct mapping from risk-critical objects to safe actions in data-driven techniques, the black-box nature makes it difficult to obtain safety guarantees. In comparison, the proposed method ensures that latent risk probability is controlled within tolerance.

\section{Problem Formulation}
\label{sec:problem_formulation}
In this section, we present the problem formulation, which includes the vehicle dynamics
in section~\ref{subsec:vehicle_model}, interaction model in section~\ref{subsec:interaction_model}, the occlusion model in~\ref{subsec:occlusion_model}, 
and the safety specifications in section~\ref{subsec:safety_specs}.

We consider a visually occluded intersection consisting the ego vehicle and crossing pedestrians.
Fig.~\ref{fig:intersection} visualizes the scenario. We model the overall road system by the following discrete-time dynamics:
\begin{align}
\label{eq:overall_dynamics}
    X^{\text{all}}_{k+1} = f^{\text{all}}(X^{\text{all}}_k) + g^{\text{all}}(X^{\text{all}}_k) \; U^{\text{all}}_k + \sigma(X^{\text{all}}_k) dW_k,
\end{align}
where $X^{\text{all}} \in \mathbb{R}^z$ is the state, and $U^{\text{all}} \in \mathbb{R}^m$ is the control input, $W \in \mathbb{R}^q$ is the standard Wiener process with $W_0 = \boldsymbol{0}$, $\sigma: \mathbb{R}^z \rightarrow \mathbb{R}^q$ is the magnitude of the noise, and $k$ is the time step. Here the state includes the vehicle states and the pedestrian states. We consider discrete time in this paper so that the proposed method can be implemented by a digital controller. One can discretize any continuous dynamics into the form of~\eqref{eq:overall_dynamics} as in~\cite{ogata1995discrete}.
Note that in this paper we focus on scenarios at intersections where the ego vehicle is interacting with pedestrians, but the formulation and the proposed method can be easily generalized to cases at different scenarios with other road users (\eg crossing vehicles) by replacing the vehicle, interaction, and occlusion models used. In other words, our proposed method is scenario agnostic.

In the following subsections, we will formally introduce the vehicle dynamics, the pedestrian models, the occlusion settings, and the safety specifications.

\subsection{Vehicle Dynamics and Nominal Control}
\label{subsec:vehicle_model}
We consider the following discrete-time control-affine dynamics for the ego vehicle: 
\begin{align}
\label{eq:dynamics}
    X^{\text{ego}}_{k+1} = f^{\text{ego}}(X^{\text{ego}}_k) + g^{\text{ego}}(X^{\text{ego}}_k) U^{\text{ego}}_k
\end{align}
where $X^{\text{ego}} \in \R^n$ is the vehicle's state, $U^{\text{ego}} \in \R^m$ is the control input to the vehicle, and $f^{\text{ego}}: \R^n \to \R^n$ and $g^{\text{ego}}: \R^n \to \R^{n \times m}$ describes the vehicle dynamics. Since the only control input to the system is the control input to the vehicle, we use $U$ instead of $U^{\text{ego}}$ in the rest of the paper for conciseness.
The choice of the vehicle model can range from simple double-integrators \cite{liang2000string} to complete 6 DoF models \cite{kiencke2000automotive}. 
The control input \(U\) is generated by a predefined control law \(N: \mathbb{R}^n \to \mathbb{R}^m\): 
\begin{align}
    U = N(X^{\text{obs}}),
\end{align}
where $X^{\text{obs}}$ is the observable states to the vehicle, \ie the available information.
This nominal controller is typically designed to meet performance criteria, such as tracking a planned trajectory, and it can be generated using either data-driven methods or optimization-based approaches (\eg MPC~\cite{kouvaritakis2016model}). 
However, it may not assure safety, particularly in the presence of latent risks. The closed-loop vehicle dynamics with the nominal controller is given by:
\begin{align}
\label{eq:closed-loop}
    X^{\text{ego}}_{k+1} = f^{\text{ego}}(X^{\text{ego}}_k) + g^{\text{ego}}(X^{\text{ego}}_k) N(X^{\text{obs}}_k).
\end{align}


\subsection{Interaction Model}
\label{subsec:interaction_model}
We represent pedestrian behavior as a combination of decision-making and motion dynamics. The decision-making component determines the agent’s high-level choices by considering the surrounding context. 
Let \(X^{\text{all}} \in \mathbb{R}^z\) denote the joint state of every agent involved in the interaction, and let \(X^{\text{ped}} \in \mathbb{R}^{z-n}\) refer to the pedestrian’s state.
We denote by \(\mathcal{Z}\) the external factors impacting decision-making—such as physical settings, social contexts, and traffic characteristics, as explained in \cite{rasouli2019autonomous}. A decision-making function then yields a distribution over potential intentions (\eg go/wait, lane-keep/lane-change) based on \(X^{\text{all}}\) and the context \(\mathcal{Z}\). Formally, the pedestrian’s decision-making process is given by
\begin{align}
d_k \sim \mathcal{D}(d_k \mid X^{\text{all}},\mathcal{Z}_k),
\label{eq:decision_make_ped}
\end{align}
where \(\mathcal{D}\) is the decision model’s distribution, \(d\) represents the pedestrian’s decision, and we assume that \(d\) may take values from a finite set. In practice, this process can be modeled as a finite state machine \cite{Kielar2014Concurrent}, an interactive multiple model (IMM) filter \cite{burger2020interaction}, a POMDP \cite{hubmann2018belief}, or a neural network \cite{Rasouli2017Are}.

Once a decision \(d\) is specified, the pedestrian dynamics dictate how the pedestrian moves in accordance with that intention. Specifically, these dynamics can be expressed as
\begin{align}
\label{eqn:interaction_update}
X^{\text{ped}}_{k+1} \sim P^{\text{ped}}(X^{\text{ped}}_{k+1} \mid X^{\text{ped}}_k, d_k),
\end{align}
where \(P^{\text{ped}}\) describes the distribution underlying the pedestrian’s state update function. Prior work has employed social force models \cite{Helbing1995Social} and recurrent neural networks \cite{Camara2020Pedestrian} to instantiate \(P^{\text{ped}}\). For any given decision \(d\), many approaches assume that each state in \(X^{\text{ped}}\) follows a Gaussian distribution—an assumption supported by the Central Limit Theorem, given the accumulation of various sources of noise and uncertainty~\cite{yoon2021interaction, ellis2009modelling}. In our framework, we assume that the dynamics in \eqref{eq:dynamics} and \eqref{eqn:interaction_update} result in \eqref{eq:overall_dynamics}. Detailed information about the interaction model utilized in our experiments is provided in Section~\ref{sec:experiments}.

\subsection{Occlusion Model}
\label{subsec:occlusion_model}
We define occlusion as any area that is outside the field of view of the ego vehicle's sensing modalities (\eg cameras, radar, sonar). 
Occlusion $\mathcal{H}_k$ is defined in a map space $\mathcal{M}$, where $\mathcal{O}_k$ is the occupied space by objects and $\mathcal{V}(X^{\text{ego}}_k, \mathcal{O}_k)$ is the visible space in the field of view (FOV) of the ego vehicle, all at time $k$.
The occlusion \(\mathcal{H}_k\) is then defined by
\begin{align}
\mathcal{H}_{k} = 
(\bar{\mathcal{O}}_k \cap \bar{\mathcal{V}}(X^{\text{ego}}_k, \mathcal{O}_k)) \in \mathcal{M},
\label{eq:def_occlusion}
\end{align}
where \(\bar{\mathcal{O}}\) and \(\bar{\mathcal{V}}\) refer to those parts of \(\mathcal{M}\) that are not within \(\mathcal{O}\) or \(\mathcal{V}\), respectively. Thus, \(\mathcal{H}_k\) represents the map area that is neither occupied by obstacles nor visible to the ego vehicle. The methods for estimating \(\mathcal{O}_k\) and \(\mathcal{V}(X^{\text{ego}}_k, \mathcal{O}_k)\) in \eqref{eq:def_occlusion} depend heavily on the sensors used. For instance, with LiDAR, one often utilizes neural networks \cite{lang2019pointpillars} to detect \(\mathcal{O}_k\), and then applies ray casting \cite{Zhang2019Lidar} to approximate \(\mathcal{V}(X^{\text{ego}}_k, \mathcal{O}_k)\). Although infrastructure-to-vehicle (I2V) or vehicle-to-vehicle (V2V) communications can mitigate occlusions in some scenarios \cite{muller2022motion}, they cannot resolve all occlusion conditions across diverse driving environments. While occlusion detection itself lies outside the scope of this paper, the proposed approach is capable of incorporating different sizes and shapes of any identified occlusion as parameters.

\subsection{Safety Specification}
\label{subsec:safety_specs}

Our goal is to ensure the long-term safety of all road users. It is assumed that there are $\numbarrierf$ safety specifications for the overall interaction system, indexed by $j \in \{1,2,\cdots,\numbarrierf\}$, and each specification is represented as follows: specification $j$ is defined by the event
\begin{align}
    \safeset_{j}=\{X^{\text{all}}\in\R^{z}:\barrierf_{j}(X^{\text{all}})\geq 0\},
\end{align}
where $\barrierf_j:\R^{z}\rightarrow \R$ is a continuous mapping. The definition can capture various safety requirements in autonomous driving, for example, all road users do not collide with each other, and the vehicle's speed should be less than a certain value when it is close to other vehicles.
Let
\begin{align}
    \safeevent=\{\xjoint\ii{}{\tau}\in\safeset_j,\forall\tau\in\{k,k+1,\cdots,k+T\}, \forall j\},
\end{align}
where $T$ is the outlook time horizon.
The goal is to ensure that
\begin{align}
    \label{eq:safety_goal}
    \probof(\safeevent)\geq 1-\unsafethreshold,\ \forall k\geq 0, 
\end{align}
where $\unsafethreshold$ is a design parameter chosen to specify the tolerable probability of risk. This parameter provides flexibility for system designers to balance conservativeness and operational feasibility based on application requirements. Due to occlusions, there may be cases when feasible solutions for safety with probability one do not exist---for example, the vehicles may not be able to move forward until all information of occluded area is obtained (Remark \ref{rm:safety-with-probability-one}). This formulation allows the system to use non-zero but sufficiently small $\unsafethreshold$ to improve feasibility in such situations.  

\begin{remark}
\label{rm:safety-with-probability-one}
For general stochastic differential equations, stochastic invariance gives conditions to remain within a known set at all times~\cite{zhou2023generalized,da2004invariance,doss1977liens,abi2018stochastic,aubin1990stochastic}. However, these conditions may not hold in certain driving environments when there are large uncertainties due to diverse factors such as occlusions and unobserved variables associated with agent motion (\eg managing tangential volatility and inward-pointing compensated drift~\cite{abi2018stochastic}). 

\end{remark}

In this paper, we focus on the case study of collision avoidance at visually occluded intersections. In this scenario, the safety specification is given by
\begin{equation}
    \safeset = \{X^{\text{all}}\in\R^{z}: \|p - p^{\text{ped}_i}\| \geq d_{\text{min}}, \forall i\},
\end{equation}
where $p$ is the position of the ego vehicle, $p^{\text{ped}_i}$ is the position of the $i$-th pedestrian, and $d_{\text{min}}$ is the required minimal distance between the vehicle and the pedestrians.

\section{Proposed Method}
\label{sec:proposed method}
In this section, we first present a safe condition to ensure the long-term safety of the system in section~\ref{subsec:safety_cond}, and then show how to quantify risk in occluded intersection in section~\ref{subsec:risk_estimation}. After that, we present our safe control algorithm in section~\ref{subsec:safe_control}.



\subsection{Condition for Assuring Safety}
\label{subsec:safety_cond}

In this subsection, we present a sufficient condition for the long-term safety specifications~\eqref{eq:safety_goal}. Let
\begin{align}
\label{eq:safe_prob_def}
    \safeprob(\availableinfo):=\probof(\safeevent|\availableinfo)\in\R
\end{align}
be the sequence of probability of event $\safeevent$ conditioned on the information $\availableinfo$. 

\begin{remark}
\label{rmk:safe prob meaning}
The variable $\mathbf{\Psi}(\availableinfo)$ has the physical meaning of the safety probability of the system in the long term. Its value at $\availableinfo$ indicates how risky the system will be in the future, evolving from a state with information $\availableinfo$. 
\end{remark}

We define a notion of conditional discrete-time generator as below. 

\begin{definition}
\label{def:afy}
(Conditional discrete-time generator).
The conditional discrete-time generator $A$ of a discrete-time stochastic process $\{x_k \}_{k\in\mathbb{Z}_+}$ conditioned on another process $\{y_k \}_{k\in\mathbb{Z}_+}$ with sampling interval $\Delta t$ evaluated at time $k$ is given by
    \begin{align}
    \label{eq:afy}
        A\phi(x_k|y_k)=\frac{\E[\phi(x_{k+1})| y_k]- \E[ \phi(x_k)| y_k ] }{\Delta t}
    \end{align}
    whose domain is the set of all functions $\phi:\R^n\rightarrow\R$ of the stochastic process\footnote{Note that while the discrete-time generator is generally defined over all functions $\phi$, in our case of safe control, the function of interest is the long-term safety probability $\safeprob$.}.
\end{definition}
When $x_k = y_k$, this generator becomes the discrete-time counterpart of the continuous-time infinitesimal generator. The conditioning of $y_k$ is to capture the ego vehicle's limited information due to occlusions.
Although the value of $A\phi(y_k)$ depends on both $x_k$ and $y_k$, with a slight abuse of notation, for the rest of the paper, we will use $A\phi(y_k)$ where the discrete-time stochastic process $x_k$ in Definition~\ref{def:afy} is the full state of the interaction system, \ie $X^{\text{all}}_k$ in 
\eqref{eq:overall_dynamics}.

Let $\infoi_{k}$ be the information that the ego vehicle can acquire at time $k$. Note that $\infoi_{k} = X^{\text{ego}}_k$ if no other road users appear from the occlusions.

We consider the following condition at all time $k$:
\begin{align*}
    A\safeprob(\infoi_{k})\geq -\algfunction(\safeprob(\infoi_{k})-(1-\unsafethreshold)),\ \forall k\geq 0. \label{eqn:safe_cond} \eqnumber
\end{align*}
Here, $\algfunction: \R \rightarrow \R$ is a function that satisfies the following 2 design requirements:
\begin{itemize}[leftmargin=*]
    \item[] Requirement 1: $\algfunction(h)$ is linear and increasing in $h$.
    \item[] Requirement 2: $\algfunction(h) \leq h$ for any $h \in\mathbb{R}^+$.
\end{itemize}
Note that $\alpha(h) = \eta h$ for all $\eta \in (0,1)$ satisfies the above requirements.\footnote{\rev{Such requirements are essential to provide safety guarantees in Theorem~\ref{thm:1}.}}
The probability measure of $\probof(\safeevent|\availableinfo)$ is taken over $X^{\text{all}}$, the global state, conditioned on $\infoi$, the information that can be accessed by the ego vehicle. Therefore, the values on both sides of \eqref{eqn:safe_cond} can be computed using $\infoi$. 
\rev{Intuitively, condition~\eqref{eqn:safe_cond} enforces that the gradient of the long-term safety probability $\Psi$ to be positive if its value drops below the desired threshold $1-\epsilon$, ensuring that the long-term safety probability always maintains above the threshold. In the following, we give a formal theorem for the long-term safety guarantee from condition~\eqref{eqn:safe_cond}.}

\begin{theorem}
\label{thm:1}
Consider systems \eqref{eq:dynamics} and \eqref{eqn:interaction_update} which forms~\eqref{eq:overall_dynamics}. We assume the initial condition $X^{\text{all}}_0=x^{\text{all}}_0$ satisfies $\probof(\safeevent|X^{\text{all}}_0=x^{\text{all}}_0)\geq 1-\unsafethreshold$. If at each time $k$, the ego vehicle generates a control policy that satisfies \eqref{eqn:safe_cond}, then the following condition holds:
\begin{align}
    \probof(\safeevent) = \expof[\probof(\safeevent|x^{\text{all}}_{k})]\geq 1-\unsafethreshold,\ \forall k\geq 0. \label{eq:safety_guarantee}
\end{align}
\end{theorem}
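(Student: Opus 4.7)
The plan is to prove Theorem~\ref{thm:1} by induction on $k$: derive a deterministic scalar recursion on the sequence $\expof[\safeprob(\infoi_k)]$ from condition~\eqref{eqn:safe_cond}, show that this sequence never falls below $1-\unsafethreshold$, and then conclude via the tower property of conditional expectation.

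First, I would introduce the shorthand $h_k := \safeprob(\infoi_k) - (1 - \unsafethreshold)$. Expanding the conditional generator of Definition~\ref{def:afy}, condition~\eqref{eqn:safe_cond} is equivalent to
\begin{equation*}
    \expof\!\left[h_{k+1} \,\big|\, \infoi_k\right] \;\geq\; h_k \,-\, \algfunction(h_k)\,\Delta t.
\end{equation*}
Taking unconditional expectations on both sides via the tower property and invoking Requirement~1 (linearity of $\algfunction$, which allows the exchange $\expof[\algfunction(h_k)] = \algfunction(\expof[h_k])$), I obtain the deterministic one-variable recursion
\begin{equation*}
    \expof[h_{k+1}] \;\geq\; \expof[h_k] \,-\, \algfunction(\expof[h_k])\,\Delta t.
\end{equation*}

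Next, I would close the induction. The base case follows from the hypothesis $\probof(\safeevent \mid X^{\text{all}}_0 = x^{\text{all}}_0) \geq 1 - \unsafethreshold$ and the tower property, which together give $\expof[h_0] = \probof(\safeevent) - (1-\unsafethreshold) \geq 0$. For the inductive step, assuming $\expof[h_k] \geq 0$, Requirement~2 yields $0 \leq \algfunction(\expof[h_k]) \leq \expof[h_k]$, so $\algfunction(\expof[h_k])\,\Delta t \leq \expof[h_k]$ for any reasonable step size $\Delta t \leq 1$, hence $\expof[h_{k+1}] \geq 0$. The desired bound then comes from a final application of the tower property:
\begin{equation*}
    \probof(\safeevent) \;=\; \expof\!\left[\probof(\safeevent \mid X^{\text{all}}_k)\right] \;=\; \expof[\safeprob(\infoi_k)] \;=\; \expof[h_k] + (1-\unsafethreshold) \;\geq\; 1 - \unsafethreshold.
\end{equation*}

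The main obstacle is the passage from the conditional inequality to the scalar recursion: without linearity of $\algfunction$, Jensen's inequality can send $\expof[\algfunction(h_k)]$ in the wrong direction relative to $\algfunction(\expof[h_k])$, and one loses the ability to control $\expof[h_{k+1}]$ from below in terms of $\expof[h_k]$ alone. This is precisely why Requirement~1 is imposed rather than allowing a generic class-$\mathcal{K}$ $\algfunction$. A secondary subtlety is verifying the tower-property identities with care, since $\infoi_k$ represents only a coarsening of the full state $X^{\text{all}}_k$ under occlusion; the nested conditional expectations nevertheless telescope to the unconditional $\probof(\safeevent)$ because $\safeprob(\infoi_k) = \probof(\safeevent \mid \infoi_k)$ by definition.
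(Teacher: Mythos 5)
Your proof is correct and matches the substance of the argument the paper relies on: the paper gives no in-text proof, deferring entirely to \cite[Theorem 1]{jing2022probabilistic}, and the induction you give---expand the conditional generator, take total expectation, use linearity of $\algfunction$ to exchange it with $\expof[\cdot]$, and close the recursion $\expof[h_{k+1}]\geq \expof[h_k]-\algfunction(\expof[h_k])\,\Delta t$ via Requirement~2---is exactly that reference's reasoning specialized to the occluded-information setting. The one point worth keeping visible is the caveat you already flag yourself: the inductive step needs $\algfunction(\expof[h_k])\,\Delta t \leq \expof[h_k]$, which under Requirement~2 as stated ($\algfunction(h)\leq h$) only follows when $\Delta t \leq 1$; this is a tacit assumption left implicit in the paper's stated requirements rather than a gap in your argument, and you correctly surface it.
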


\begin{proof}
    See~\cite[Theorem 1]{jing2022probabilistic}, where the ego vehicle is the controlled agent, and the available information is $\infoi$.
\end{proof}
Theorem~\ref{thm:1} says the long-term safety of the system is guaranteed by the proposed safety condition~\eqref{eqn:safe_cond} for all time with desired probability. Note that the safety condition~\eqref{eqn:safe_cond} only involves the available information $\infoi$.

Condition~\eqref{eqn:safe_cond} is a condition on the control input as the infinitesimal generator gives the derivative of the safety probability over time with regard to certain control.
One can also solve for safe control through constrained optimizations incorporating condition~\eqref{eqn:safe_cond}. In practice, one might need to calculate the safety probability and its gradients numerically, as indicated in~\cite{gangadhar2023occlusion}. 
In this following sections we will show how to estimate the safety probability and find safe control for the collision avoidance problem at occluded intersection.

\subsection{Risk Estimation}
\label{subsec:risk_estimation}
In this section, we introduce how the probability of safety under visual occlusion is estimated. 
\rev{While we adopt a Monte Carlo-based method below, our framework can use other risk quantification or learning techniques to estimate the safe probability~\cite{wang2022potential, yu2019occlusion, katrakazas2019new, wang2024myopically, wang2023generalizable}. In the context of risk probability due to occlusion, efficient techniques for online computation have also been developed~\cite{yamada2023bayesian}.} 

For an ego vehicle starting at initial position and velocity $(p_0, v_0)$, we sample multiple trajectories to get an empirical estimate of the safety probability over certain horizon $T$. Specifically, at each trial we run a nominal controller when there is no pedestrian in sight, and apply emergency brakes when there are visible pedestrians, until time reaches $T$. 
We identify the safety of the ego vehicle by recording if any collision happens during the trial.
The detailed procedures for safety identification is summarized in Algorithm~\ref{alg:determine_safety}. 
Then, for the fixed time horizon $T$ and the initial state, we run the safety identification procedure for $N$ times and obtain the ratio of safe trajectories $F = \sum_{n=1}^N \frac{s_n}{N}$, as shown in Algorithm~\ref{alg:risk_prob_est}.

By adjusting the initial position and initial velocity of ego vehicle, we can get the safety probabilities of different states, which is useful for safe control design to be introduced in the next subsection.


\begin{algorithm}[t]
\caption{Determine Vehicle Safety}\label{alg:determine_safety}
\begin{algorithmic}[1]
\Procedure{egoSafety}{$T, p_0, v_0$} \Comment{time horizon, initial state, initial speed}
    \State \textbf{Given:} $\lambda$, collision\_dist, safe\_dist
    \State \textbf{Initialize} world
    \State \textbf{Initialize} ego\_vehicle = $e \leftarrow \text{init}(p_0, v_0)$ \Comment{initialize ego vehicle with initial position, speed}
    \State \textbf{Initialize} occlusion = $o$
    \State \textbf{Initialize} pedestrian = $w_i \leftarrow \text{Random}(\lambda_i), \forall i$ \Comment{initialize walker based on some random distribution parameterized by $\lambda$}
    
    \State \textbf{Initialize} target speed $v_{\text{target}}$ 
    \For {$t$ in $0:T$} \Comment{run until time horizon}
    \State emergency\_stop $\leftarrow$ false
    \State Determine visibility $c \leftarrow$ visibleWalker($e, w, o$)
    \If{$c =$ true}
    \State emergency\_stop = false
    \EndIf
    \If {emergency\_stop = true}
    \State Apply brakes to $e$
    \Else
    \State Calculate throttle 
    $u \leftarrow \text{NominalController}(e, v_{\text{target}})$
    \State Apply throttle $u$ to $e$
    \EndIf
    \If{$\sqrt{(e.p_x - w_i.p_x)^2 + (e.p_y - w_i.p_y)^2} <$ collision\_dist for some $i$} \Comment{collision distance}
    \State \textbf{return} $s \leftarrow 0$ \Comment{unsafe}
    \EndIf
    \If{$e.p_x$ $\geq$ safe\_dist} \Comment{safely past collision area}
    \State \textbf{return} $s \leftarrow 1$ \Comment{safe}
    \EndIf
    \EndFor
    \State \textbf{return} $s \leftarrow 1$ \Comment{time horizon reached}    
\EndProcedure
\end{algorithmic}
\end{algorithm}

\begin{algorithm}[t]
\caption{Risk Probability Estimation}\label{alg:risk_prob_est}
\begin{algorithmic}[1]
\Procedure{riskEstimate}{$T, v_0, p_0, N$} \Comment{time horizon, initial speed, initial position, number of samples}



    \For {$n \text{ in } 1:N$} \Comment{number of runs}

    \State 
    Determine vehicle safety $s_n \gets \text{egoSafety}({T, s_0})$

    \EndFor

    \State \textbf{return} safety probability $\safeprob = \sum_{n=1}^N \frac{s_n}{N}$


    
    
\EndProcedure
\end{algorithmic}
\end{algorithm}

\subsection{Safe Control}
\label{subsec:safe_control}
In this section, we propose the safe control strategy to ensure the long-term safety of the system. 

We consider the following discrete time kinematic model to approximate the vehicle dynamics
\begin{equation}
\label{eq:vehicle_dynamics}
\begin{aligned}
    p_{k+1} & = v_{k+1} \; \Delta t + p_k \\
    v_{k+1} & = u \; \Delta t + v_k,
\end{aligned}
\end{equation}
where $p$ is the position (in longitudinal direction) of the vehicle, $v$ is the longitudinal velocity, and $X^{\text{ego}} = [p, v]^\top$. 
Here we consider the implicit Euler method to simulate the dynamics, \ie the position dynamics is driven by the velocity at the next time step, otherwise Runge–Kutta or similar methods can be used. Here, the continuous time dynamics for the vehicle velocity is approximated by
\begin{equation}
\label{eq:continuous_dynamics}
    dv/dt = u.
\end{equation}
Then, we can write the long-term safety condition~\eqref{eqn:safe_cond} as
\begin{equation}
\label{eq:safe_constraint}
\begin{aligned}
    A\safeprob := \frac{d\safeprob}{dt} = & \frac{d\safeprob}{dv} \frac{dv}{dt} + \frac{d\safeprob}{dp} \frac{dp}{dt} \\
    = & \frac{d\safeprob}{dv}u + \frac{d\safeprob}{dp} v \geq -\alpha(\safeprob - (1-\epsilon)),
\end{aligned}
\end{equation}
where $\frac{d\safeprob}{dv}$, $\frac{d\safeprob}{dp}$, $v$, $\Delta t$, $\safeprob$ are either known or can be evaluated. Thus, \eqref{eq:safe_constraint} is a linear constraint on the control input $u$.

At each time step, we can solve for the safe control via the following constrained optimization
\begin{equation}
\label{eq:safe_control_optimization}
\begin{aligned}
    u^* & = \arg\min_{u \in \mathcal{U}} \|u - u_{\text{nominal}}\|_2^2 \\
    & \text{  s.t.  } \eqref{eq:safe_constraint}
\end{aligned}
\end{equation}
where $u_{\text{nominal}}$ is a reference nominal controller and will be used if the safety condition is satisfied, otherwise the safe control will be activated.
The procedures for safe control at occluded intersection is summarized in Algorithm~\ref{alg:safe_control}.

\begin{remark}
\label{rmk:ease of design}
The proposed optimization-based safe control~\eqref{eq:safe_control_optimization} is easy to design and implement. It only has function $\alpha$ and the desired risk tolerance $\epsilon$ as tunable parameters. It only imposes linear constraints on control, which can be optimized in quadratic program (QP) efficiently. \rev{The initial feasibility of~\eqref{eq:safe_control_optimization} is assumed as in Theorem~\ref{thm:1}. 
Relaxations of $\epsilon$ and $\alpha$ can be used to encourage feasibility when encountering ill-posedness issues or numerical errors.}
\end{remark}

\begin{algorithm}[t]
\caption{Probabilistic Safe Control}\label{alg:safe_control}
\begin{algorithmic}[1]
    \State \textbf{Given:} $N, \tau, v_0, p_0, \Delta x, \Delta v, v_{\text{target}}, T, d_{\text{min}}$, $\epsilon$ 

    \State $k \gets 0$

    \State $s \gets 1$ \Comment{safety indicator}

    \While {$k < T_{\text{end}}$ and $p_x < x_{\text{end}}$} \Comment{within simulation horizon and vehicle not passing the intersection}

    \State Get safety probability 
    
    $\safeprob = \text{risk\_estimate}(T, v_k, p_k, N)$

    \If{$\safeprob > 1 - \epsilon$}

    \State $u \gets \text{NominalController}(v_k, v_{\text{target}})$

    \Else

    \State Get safety probability at neighboring state 

    $\safeprob^\pm_p = \text{risk\_estimate}(T, v_k, p_k \pm \Delta x, N)$ 

    $\safeprob^\pm_v = \text{risk\_estimate}(T, v_k \pm \Delta v, p_k, N)$
    
    

    

    \State Estimate probability gradients 
    
    $\frac{d\safeprob}{dx} \approx \frac{\safeprob^+_p - \safeprob^-_p}{2\Delta x}$, 
    $\frac{d\safeprob}{dv} \approx \frac{\safeprob^+_v - \safeprob^-_v}{2\Delta v}$

    \State Solve for safe control $u$ through~\eqref{eq:safe_control_optimization}

    \EndIf

    \State Execute control $u$ and observe $p_{k+1}, v_{k+1}$

    \If{collision happens $\|p_{k+1} - p^{\text{ped}_i}_{k+1}\| < d_{\text{min}}$}

    \State $s \gets 0$

    \State Terminate the simulation

    \Else

    \State $k \gets k + 1$

    \EndIf

    \EndWhile

\State \textbf{return} $s$
\end{algorithmic}
\end{algorithm}

\section{Experiments}
\label{sec:experiments}
In this section, we demonstrate the efficacy of the proposed occlusion and interaction-aware safe controller using the CARLA autonomous driving simulator platform~\cite{dosovitskiy2017carla}.

\subsection{CARLA Setup}
\label{sec:carla_setup}


CARLA is an open-source simulator for development of autonomous driving systems. 
It has been widely used in autonomous driving research to simulate complex driving scenarios, such as collision avoidance~\cite{li2021risk}, lane keeping~\cite{samak2021proximally}, traffic sign recognition~\cite{mijic2021autonomous}, and end-to-end driving systems~\cite{xiao2020multimodal}.



For all experiments, we use CARLA version 0.9.10, with map Town03 and vehicle model Tesla Model 3. We run all simulations on a Linux 6.8.0 machine with one NVIDIA GeForce RTX 4090 GPU on Ubuntu 22.04.3.


The CARLA setting for the scenario of interest is shown in Fig.~\ref{fig:carla_setup}, where the ego vehicle is moving along the $x$ direction to pass the intersection. There is a parked truck at $(x, y) = (-7.0, 5.0)$, which is the visual occlusion in this scenario. Pedestrians come behind the truck starting from $(x, y) = (0, 13.0)$, and proceed to cross the intersection along the negative $y$ direction with constant speed $1 \mathrm{m/s}$. 
Pedestrians will come out at random times, and the waiting time for the occurrence of the pedestrian satisfies the following distribution.
For the first pedestrian, the waiting time $\Delta \tau_1$ is generated via a scaled truncated normal distribution through rejection sampling:
\begin{equation}
\label{eq:ped_distribution_1}
    \Delta \tau_1 \sim \mathcal{N}(1.5,6.25),\quad \Delta \tau_1 \in [0,10],
\end{equation}
and time interval $\Delta \tau$ between all subsequent pedestrians is generated via the following scaled truncated normal distribution:
\begin{equation}
\label{eq:ped_distribution_2}
    \Delta \tau \sim \mathcal{N}(6,6.25),\quad \Delta \tau \in [0,15].
\end{equation}

The nominal controller for the ego vehicle is a cruise controller, which maintains the same constant velocity as the initial velocity. The ego vehicle is also equipped with an emergency controller with constant brake force $0.05$.
The nominal control is executed by default, while the emergency control will be activated and will overwrite the nominal controller if the ego vehicle sees any crossing pedestrian. If the ego vehicle is not visually occluded, the condition for the ego vehicle to see any pedestrian in sight can be expressed mathematically as
\begin{equation}
    -10.0 < x_{\text{ego}} < 0.0
\end{equation}
\begin{equation}
    y_{\text{ped}}-6.5 < y_{\text{ego}} < y_{\text{ped}}+6.5
\end{equation}
where $(x_{\text{ego}}, y_{\text{ego}})$ is the position of the ego vehicle, and $y_{\text{ped}}$ is the $y$ position of the pedestrian. 
While some literature considers high-level trajectory planning, we apply the proposed technique to a low-level controller that limits the vehicle to maintain in the same lane for fixed $y_{\text{ego}} = 0$.


\begin{figure}[t]
    \centering
    \includegraphics[width=0.9\linewidth]{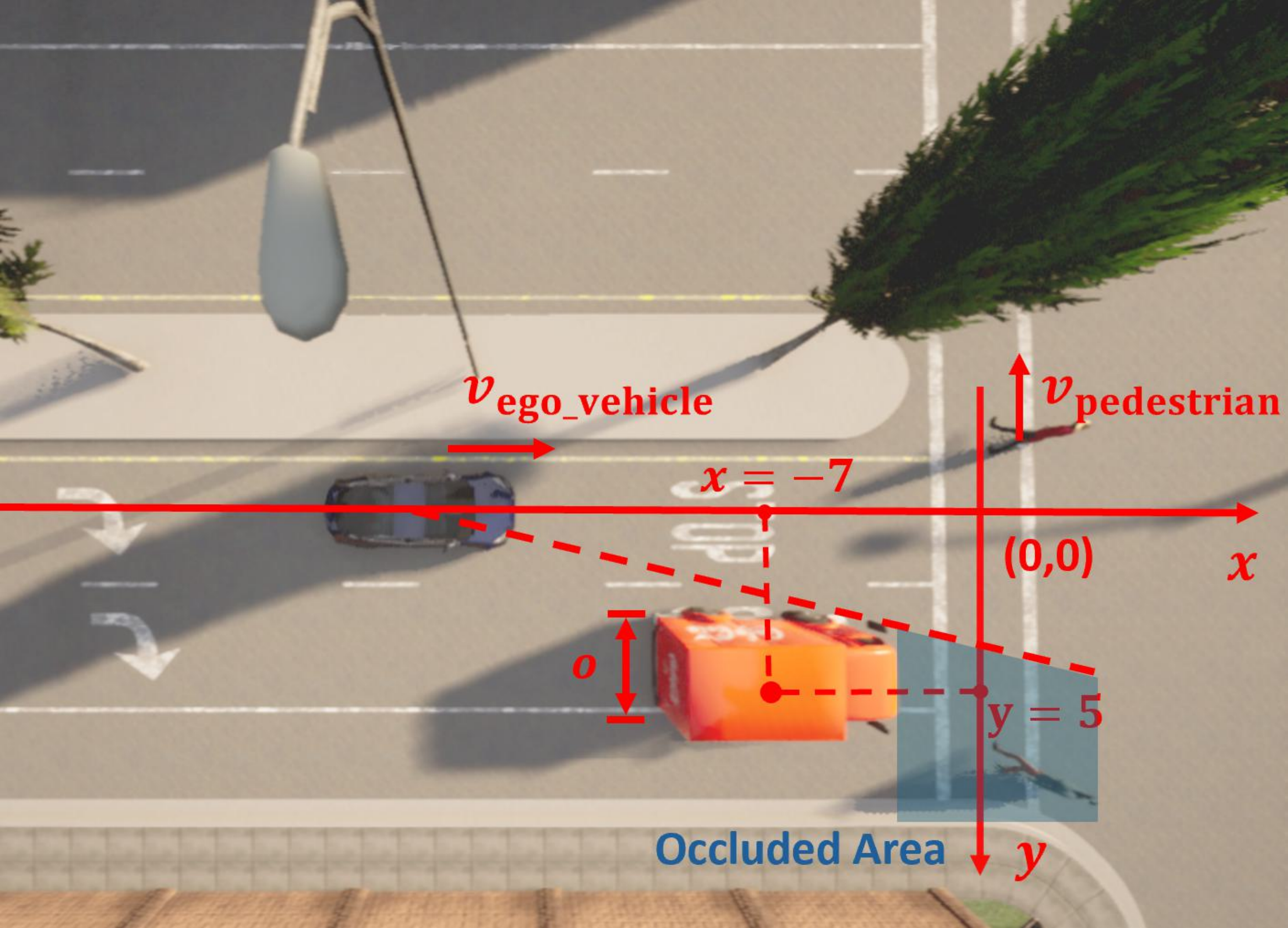}
    \caption{CARLA Scenario Setup.}
    \label{fig:carla_setup}
    \vspace{-0.5em}
\end{figure}

\subsection{Risk Probability Estimation}

We run Monte Carlo (MC) simulation to obtain safety probability of the nominal controller for time horizon $T=10 \mathrm{s}$ for different initial conditions (position and velocity).
We set the discrete time step for the simulator to be $\Delta t = 0.05 \mathrm{s}$. 
The parameters for the pedestrian model and the nominal controller is described in Section~\ref{sec:carla_setup}, and the MC computation scheme follows Alg.~\ref{alg:determine_safety} and Alg.~\ref{alg:risk_prob_est}.
Fig.~\ref{fig:heat_map_psi} shows the estimated safety probability with different initial states of the vehicle. We can see that in general the safety probability will increase as the vehicle gets away from the intersection and with slower initial speed. This matches the intuition that, with a slower speed and a farther away starting position, the vehicle can more easily come to a stop before the intersection with the emergency control to avoid a potential collision, resulting in a higher probability of safety.

\begin{figure}[t]
    \centering
    \includegraphics[width=0.9\linewidth]{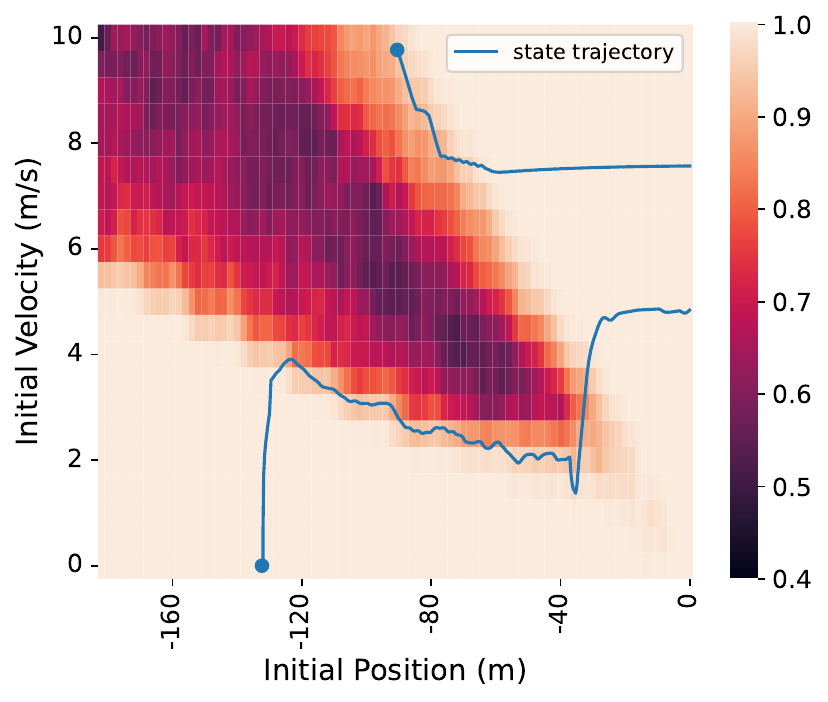}
    \caption{Estimation of safety probability $\mathbf{\Psi}(x)$ with grid discretization $dx_{\text{init}} = 2$ for initial position and $dv_{\text{init}} = 0.5$ for initial velocity. Blue curves show the typical vehicle trajectories under the proposed safe controller, with the blue dots as initial states.
    }
    \label{fig:heat_map_psi}
    \vspace{-0.5em}
\end{figure}

\subsection{Safe Control}


We run the safe control described by Algorithm~\ref{alg:safe_control}, where the safety probability is obtained by looking up the pre-computed table.
The safe control will drive the vehicle to move along the level set of the desired safety probability, unless the nominal controller is safe enough. Two sample trajectories of the safe control are visualized in blue lines in Fig.~\ref{fig:heat_map_psi}. 
It can be seen that for the trajectory below, when the ego vehicle is away from the intersection, the vehicle will slow down to maintain high safety probability. Once it passes a critical point, the vehicle will accelerate to pass the intersection safely. For the trajectory above, since the safety probability is often high enough near the initial state of the vehicle, it will maintain a target speed to pass through the intersection safely.

\vspace{-0.3em}

\subsection{Methods for Comparison}
We compare the proposed method with the following three baselines, where we limit our scope to methods with efficient online computation.

\textbf{Velocity tracking control (PID):} PID controllers~\cite{johnson2005pid} are implemented to track certain desired velocities for comparison.

\textbf{Worst-case control:} When there is latent risk detected (estimated risk probability less than 1), a constant braking force of $0.1$ will be applied for a duration of $0.25$ seconds. Note that this controller applies constant brake force and only accounts for the binary occurrence of risks, as opposed to the proposed method where the degree of the safety probability violation is considered. 

\textbf{Data-driven control (TransFuser):} We apply TransFuser~\cite{chitta2022transfuser}, a state-of-the-art data-driven controller that integrates multiple sensor inputs for safe and efficient autonomous driving to our scenario. \rev{The controller is trained on expert policy with additional sensors (such as LiDAR, IMU and depth camera) using privileged information (such as access to HD maps and ground truth traffic light states) from CARLA. Details about the method can be found in~\cite{chitta2022transfuser}, and we take the trained model published by the authors\footnote{https://github.com/autonomousvision/transfuser}.}

\rev{\textbf{Occlusion-aware model predictive control (OA-MPC):} We implement OA-MPC~\cite{firoozi2024oa}, a model predictive control method with reachability-based state constraints. The method specifies the maximum velocity of phantom pedestrians, and solve for the safe trajectory in the worst case scenario while minimizing costs. Specifically, we use~\eqref{eq:vehicle_dynamics} for the vehicle dynamics model and maximize the vehicle velocity as objective. The maximum pedestrian velocity is set to be $1 \mathrm{m/s}$ for the reachability analysis.}

\rev{\textbf{Planning-based control:} Following~\cite{zheng2025occlusion, moller2025shadows}, we implement a planning-based control that tracks a prespecified trajectory. The trajectory will decelerate and then stop at the intersection regardless of whether a pedestrian is visible in sight, before accelerating again to pass the intersection.}

\vspace{-0.3em}

\subsection{Results}
We run the proposed and baseline methods for the occluded intersection problem with initial position $x_\text{init} = -120 \mathrm{m}$ and initial velocity $v_\text{init} = 0 \mathrm{m/s}$.
Parameter $\alpha(h) = 0.2 h$ is used for the proposed method. 
Fig.~\ref{fig:vel_control} shows the typical velocity of the ego vehicle over the course.\footnote{The occurrence of high-frequency oscillation is not control method related, but rather due to the issue with the built-in vehicle dynamics in CARLA. Specifically, small brake commands in CARLA can cause rapid deceleration, which results in pedal commands later to compensate.} It can be seen that PID tracks a fixed velocity and does not take latent risk into accounts, resulting in collision. The data-driven method TransFuser is not stable and ends up with collision as well under this setting, possibly due to the distribution shift of the training and testing scenarios. \rev{The worst-case method, occlusion-aware MPC and the planning based method can drive the vehicle safely through the occluded intersection, but are overly conservative and have relatively slow velocity profiles.} In contrast, the proposed method travels through the intersection safely and with high efficiency.

\begin{figure}[t]
    \centering
    \includegraphics[width=0.85\linewidth]{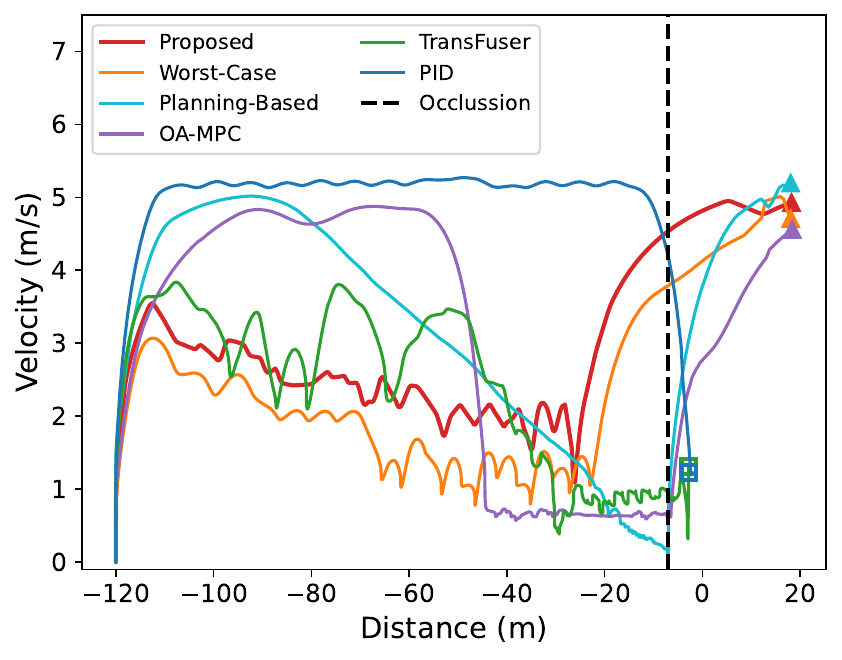}
    \caption{Vehicle velocities of the proposed safe controller and baselines methods. Triangle indicates safely pass through the intersection. Square indicates collision with pedestrian.}
    \label{fig:vel_control}
    \vspace{-0.3em}
\end{figure}

\begin{figure}[t]
    \centering
    \includegraphics[width=0.75\linewidth]{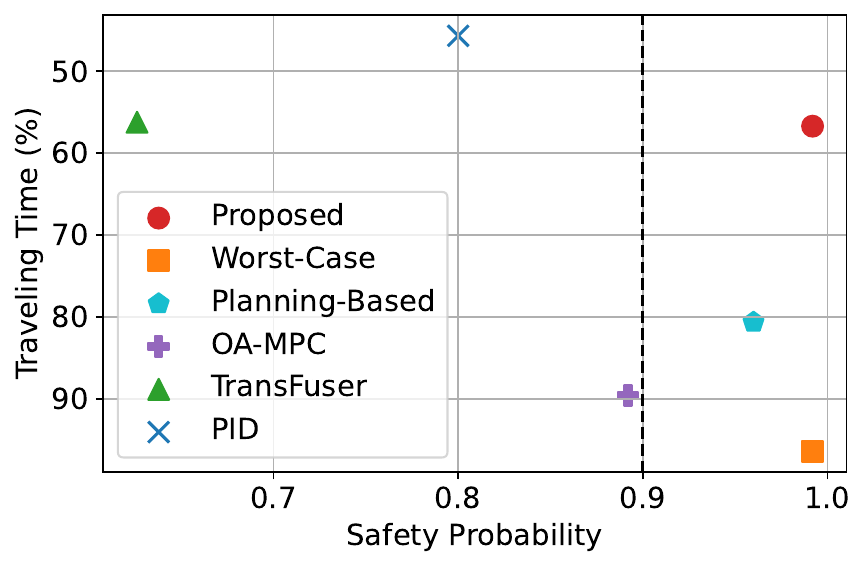}
    \caption{Safety-efficiency trade-offs.}
    \label{fig:trade_off}
    \vspace{-0.5em}
\end{figure}

Table~\ref{tab:emp_safe_prob} shows the empirical safety probability of the system and the corresponding traveling time under different initial state settings \rev{(see Table~\ref{tab:emp_safe_prob_full} in the Appendix for full results)}.\footnote{The dynamics model used for OA-MPC is not perfectly accurate which results in
collision in some cases.} The safety probability is calculated through 50 random trials.
It can be seen that with different risk tolerance $\epsilon$, the proposed method can ensure the empirical safety probability over the desired safety probability $1-\epsilon$. Among the methods that can produce the desired safety probability, the proposed method has the highest efficiency as the traveling time is lowest. Based on these data, a safety-efficiency trade-off plot is shown in Fig.~\ref{fig:trade_off}, where the two axes are the average safety probability and the average percentage of the maximum traveling time among all methods. It can be seen that the proposed method achieves the desired safety probability, and has short traveling time. \rev{The results indicate that the proposed method can ensure long-term safety under occlusion (against PID and data-driven methods), and has low traveling time thus high efficiency (against worst-case methods including OA-MPC and planning-based control).}

\begin{table}
  \caption{Empirical safety probabilities and traveling time. $\uparrow$ and $\downarrow$ indicate larger or smaller values preferred, respectively.}
  \label{tab:emp_safe_prob}
  \centering
  {
  \small
  \begin{tabular}{c|c|ccc}
    \hline
    Control Method & 
    Settings &
    $1-\epsilon$ & 
    $P_{\text{safe}}$ $\uparrow$  &
    $t \; (\mathrm{s})$ $\downarrow$\\
    \hline

    Proposed & & 0.9 & 0.98 & 26.94\\
    Worst-Case & & - & 1 & 55.49\\
    Planning-based & $x_{\text{init}}=-180\mathrm{m}$ & - & 0.96 & 34.64 \\
    OA-MPC         & $v_{\text{init}}=2\mathrm{m/s}$ & - & 0.94 & 38.29 \\
    TransFuser & & - & 0.53 & 33.76\\    
    PID &  & - & 1 & 32.68\\
    \hline

    Proposed & & 0.95 & 0.98 & 23.50\\
    Worst-Case & & - & 0.98 & 31.44\\
    Planning-based & $x_{\text{init}}=-120\mathrm{m}$ & - & 0.96 & 28.82 \\
    OA-MPC         & $v_{\text{init}}=6\mathrm{m/s}$ & - & 0.90 & 29.22 \\
    TransFuser & & - & 0.7 & 19.06\\
    PID &  & - & 0.74 & 9.34\\
    \hline

    Proposed & & 0.9 & 1 & 11.24\\
    Worst-Case & & - & 0.98 & 22.56\\
    Planning-based & $x_{\text{init}}=-60\mathrm{m}$ & - & 0.98 & 25.32 \\
    OA-MPC         & $v_{\text{init}}=2\mathrm{m/s}$ & - & 0.82 & 27.50 \\
    TransFuser & & - & 0.8 & 8.31\\
    PID &   & - & 0.9 & 14.94\\
    \hline
  \end{tabular}
  \vspace{-0.5em}
  }
\end{table}



\section{Conclusion}
\label{sec:conclusion}
This paper proposes an occlusion- and interaction-aware safe
control strategy that ensures long-term safety in the presence of latent risks without overly compromising performance. We demonstrate its reliability and computational efficiency via theoretical analysis and CARLA experiments. The results show that the proposed controller ensures long-term safety under occlusions and achieves better safety-performance trade-offs over existing worst-case and large data-driven methods. \rev{Future work includes designing and incorporating more efficient risk estimation techniques for safe control, conducting real-world experiments, and comparing the results with human driving behaviors.}

\bibliography{citation}

\begin{IEEEbiography}[{\includegraphics[width=1in,height=1.25in,clip,keepaspectratio]{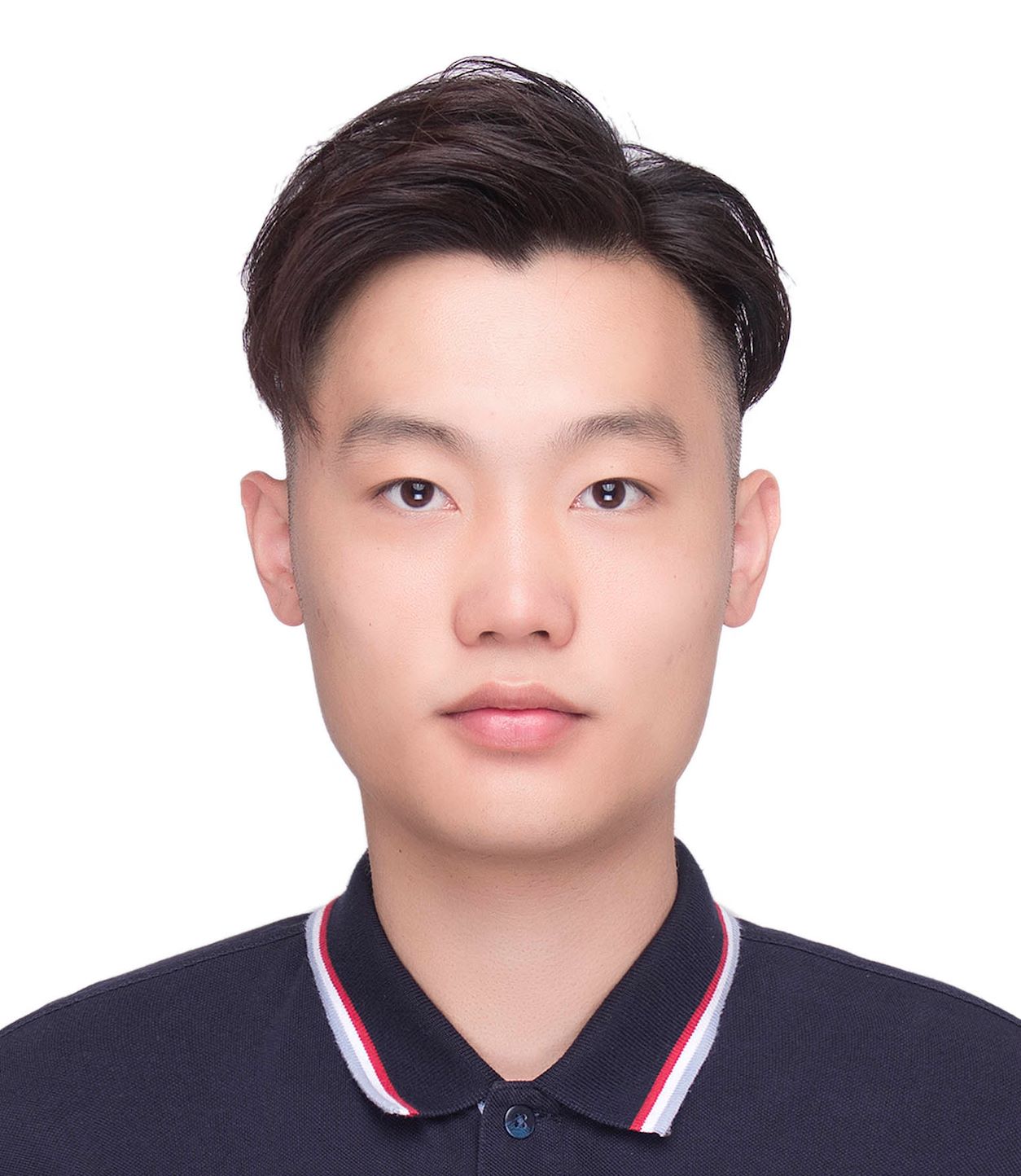}}]{Zhuoyuan Wang} received his B.E. degree in Automation from Tsinghua University, Beijing, China, in 2020 and is currently pursuing a Ph.D. degree in 
Electrical and Computer Engineering at Carnegie Mellon University, Pittsburgh, PA, USA.

His research interests include safety-critical control for stochastic systems, physics-informed learning, safe reinforcement learning and application to robotic systems.
He is a recipient of the Michel and Kathy Doreau Graduate Fellowship at Carnegie Mellon University.
\end{IEEEbiography}

\begin{IEEEbiography}[{\includegraphics[width=1in,height=1.25in,clip,keepaspectratio]{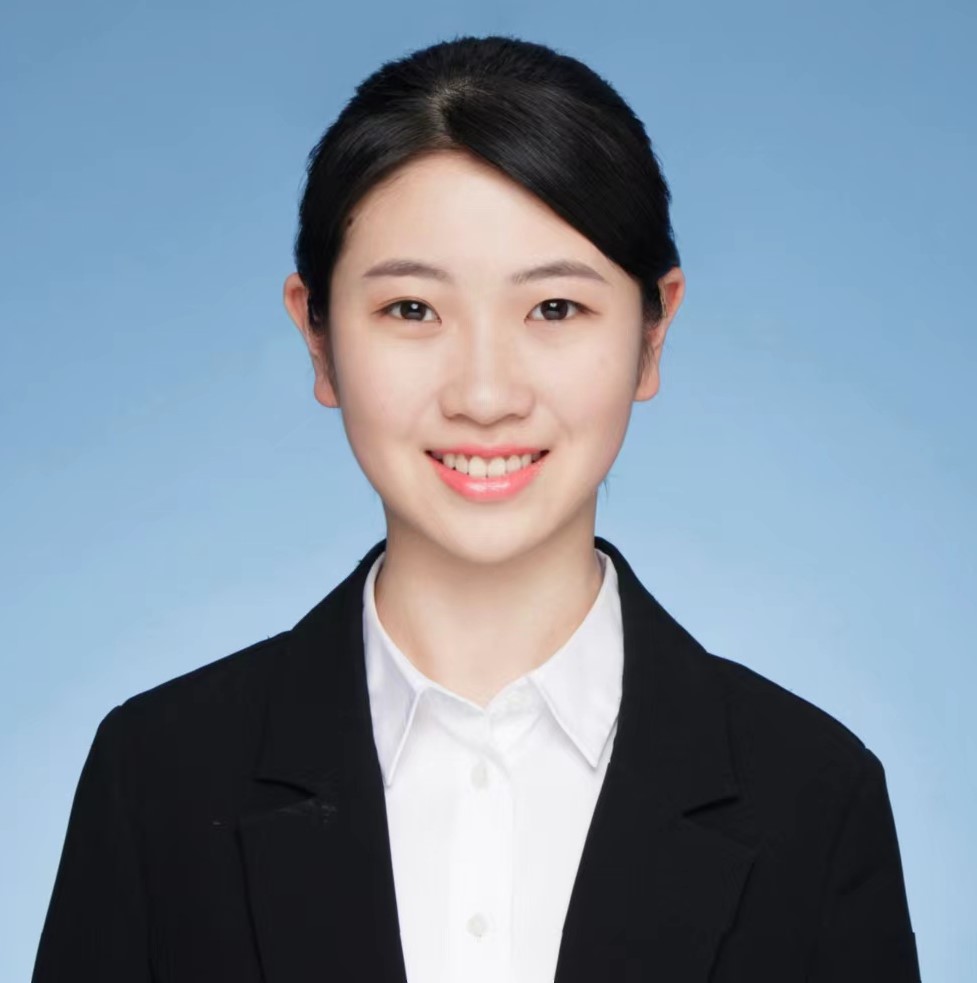}}]{Tongyao Jia} received her B.E. degree in Electronic Engineering from the Hong Kong University of Science and Technology, Hong Kong SAR, China, in 2023 and is currently pursuing a Master's degree in 
Electrical and Computer Engineering at Carnegie Mellon University, Pittsburgh, PA, USA.
\end{IEEEbiography}

\begin{IEEEbiography}[{\includegraphics[width=1in,height=1.25in,clip,keepaspectratio]{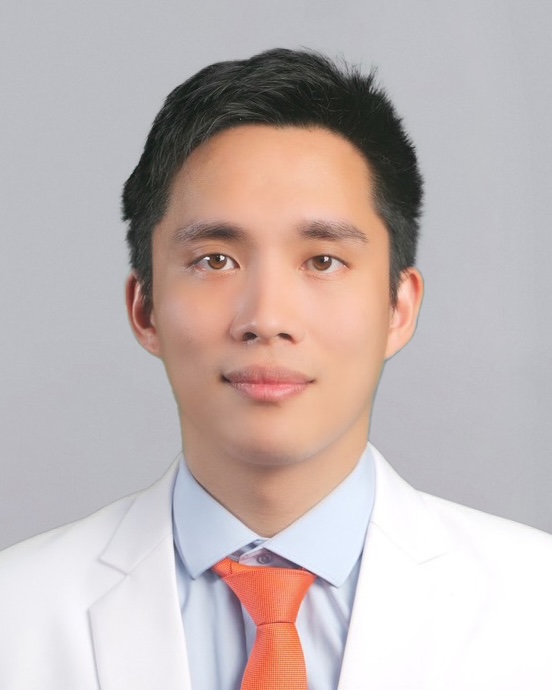}}] {Pharuj Rajborirug} received his B.S. degree from University of California Santa, Barbara, CA, USA, M.S.E. degree in Biomedical Engineering from Johns Hopkins University, MD, USA, and M.S. degree in Electrical and Computer Engineer at Carnegie Mellon University, PA, USA in 2024. He is currently a scientist at Faculty of Medicine, King Mongkut's Institute of Technology Ladkrabang (KMITL), Bangkok, Thailand.
\end{IEEEbiography}

\begin{IEEEbiography}[{\includegraphics[width=1in,height=1.25in,clip,keepaspectratio]{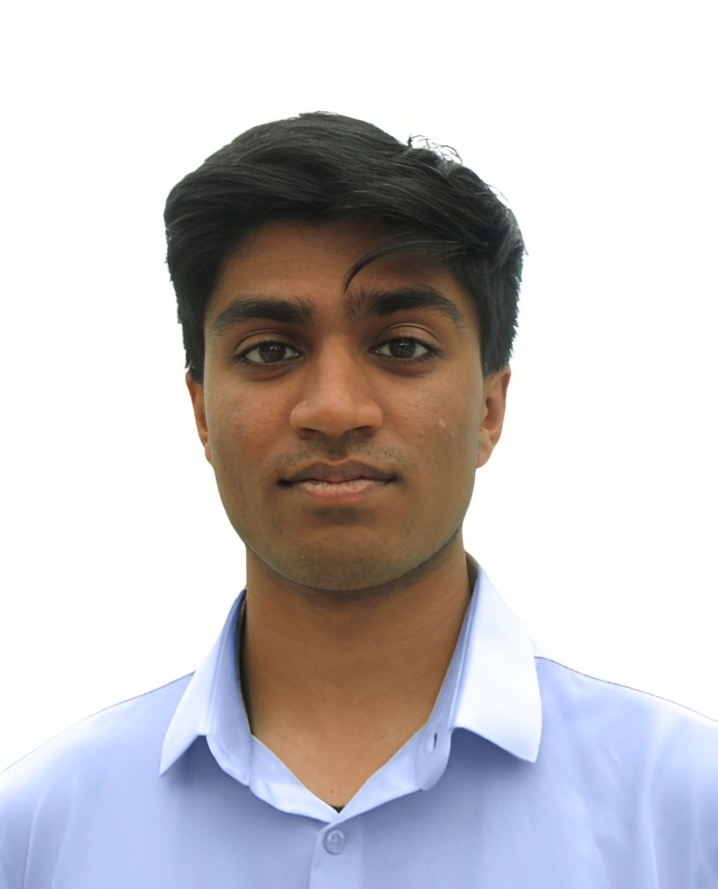}}] {Neeraj Ramesh} received his B.S. degree from Carnegie Mellon University in Pittsburgh, PA, USA and is currently also pursuing an M.S. degree in Electrical and Computer Engineering at Carnegie Mellon University.
His research interests includes deep reinforcement learning, risk-predictive computer vision, and generative modeling in application to autonomous vehicle path planning in dynamic systems.
\end{IEEEbiography}

\begin{IEEEbiography}[{\includegraphics[width=1in,height=1.25in,clip,keepaspectratio]{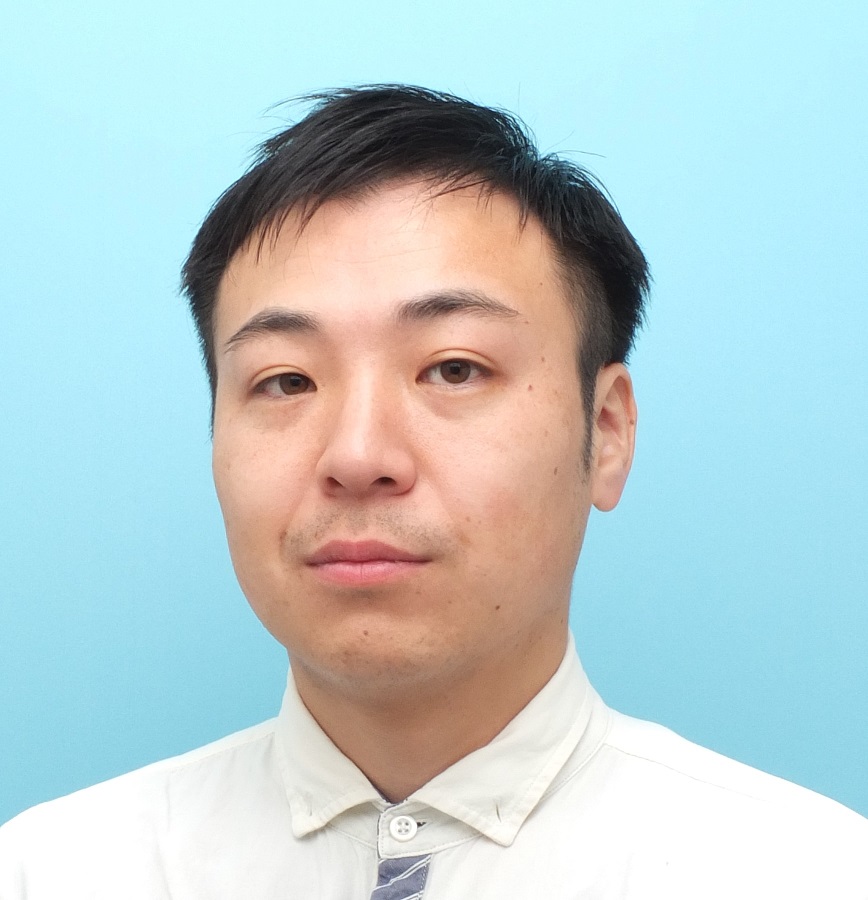}}]{Hiroyuki Okuda}  received the B.E. and M.E. degrees in Advanced Science and Technology from Toyota Technological Institute, JAPAN in 2005 and 2007, respectively, and the Ph.D. degree in Mechanical Science and Engineering from Nagoya University, JAPAN in 2010. 
He was a PD researcher with the CREST, JST from 2010 to 2012, an assistant professor of the Green Mobility Collaborative Research Center (GREMO) in Nagoya University from 2012 to 2016, and an assistant professor of the Department of Mechanical Science and Engineering in Nagoya University from 2017 to 2020. He was a visiting researcher of the Mechanical Engineering Department of U.C.Berkeley in 2018. Currently, he is an associate professor of the Graduate Department of Mechanical Systems Engineering of Nagoya University. His research interests are in the areas of system identification of hybrid dynamical system and its application to the modeling and the analysis of human behavior, the human-centered system design of autonomous/human-machine cooperative system. Dr.Okuda is a member of the IEEE, IEEJ, SICE, JSAE, and JSME.
\end{IEEEbiography}

\begin{IEEEbiography}[{\includegraphics[width=1in,height=1.25in,clip,keepaspectratio]{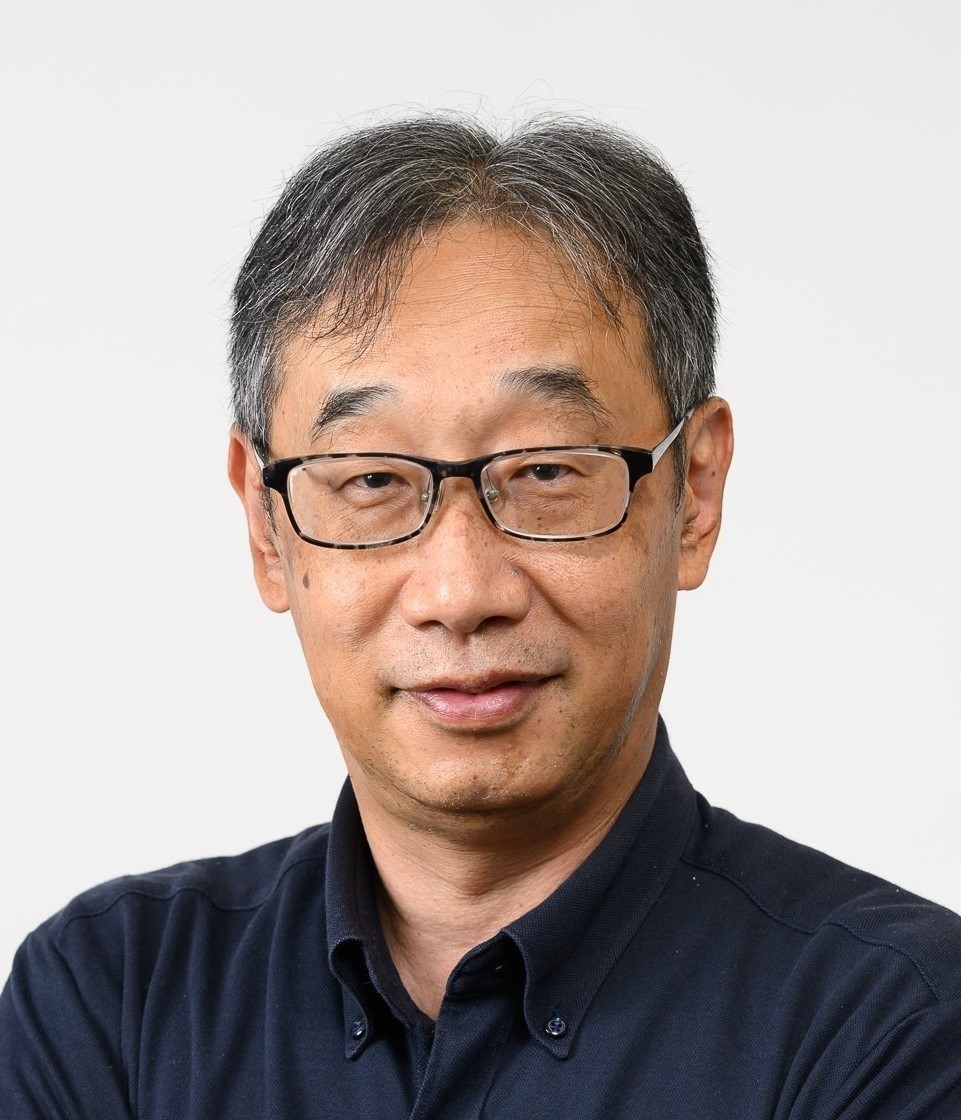}}]{Tatsuya Suzuki}  was born in Aichi, JAPAN, in 1964. He received the 
B.S., M.S. and Ph.D. degrees in Electronic Mechanical Engineering from 
Nagoya University, JAPAN in 1986, 1988 and 1991, respectively. 
From 1998 to 1999, he was a visiting researcher of the 
Mechanical Engineering Department of U.C.Berkeley. 
Currently, he is a Professor of the Department of Mechanical Systems Engineering, Nagoya University.
He also has been an Executive Director of Global Research Institute for Mobility in Society (GREMO), Nagoya University in 2018-2020, and a Principal Investigator in JST, CREST in 2013-2019.
He won the best paper award in International Conference on Autonomic and Autonomous Systems 2017
and the outstanding paper award in International Conference on Control Automation 
and Systems 2008. He also won the journal paper award from IEEJ, SICE and JSAE. 
His current research interests are in the areas of analysis and design of 
human-centric intelligent mobility systems, and integrated design of transportation
and smart grid systems. 
\end{IEEEbiography}

\begin{IEEEbiography}[{\includegraphics[width=1in,height=1.25in,clip,keepaspectratio]{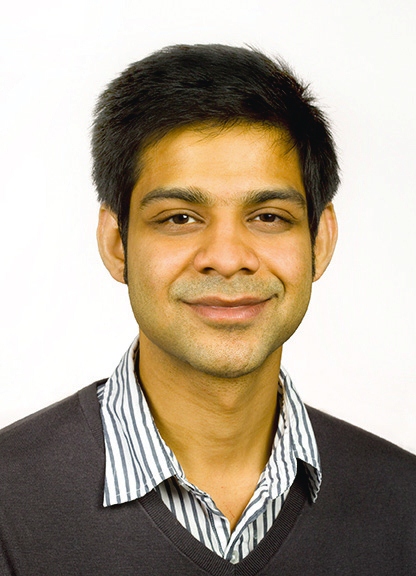}}] {Soummya Kar} received the B.Tech. degree in electronics and electrical communication engineering from the Indian Institute of Technology, Kharagpur, India, in May 2005 and the Ph.D. degree in electrical and computer engineering from Carnegie Mellon University in 2010. From June 2010 to May 2011, he was with the Electrical Engineering Department, Princeton University, as a postdoctoral research associate. He is currently the Buhl Professor of Electrical and Computer Engineering at Carnegie Mellon University. His research interests include decision making in large-scale networked systems, stochastic systems, multiagent systems and data science, with applications to cyberphysical and smart energy systems. He is a Fellow of the IEEE.
\end{IEEEbiography}

\begin{IEEEbiography}[{\includegraphics[width=1in,height=1.25in,clip,keepaspectratio]{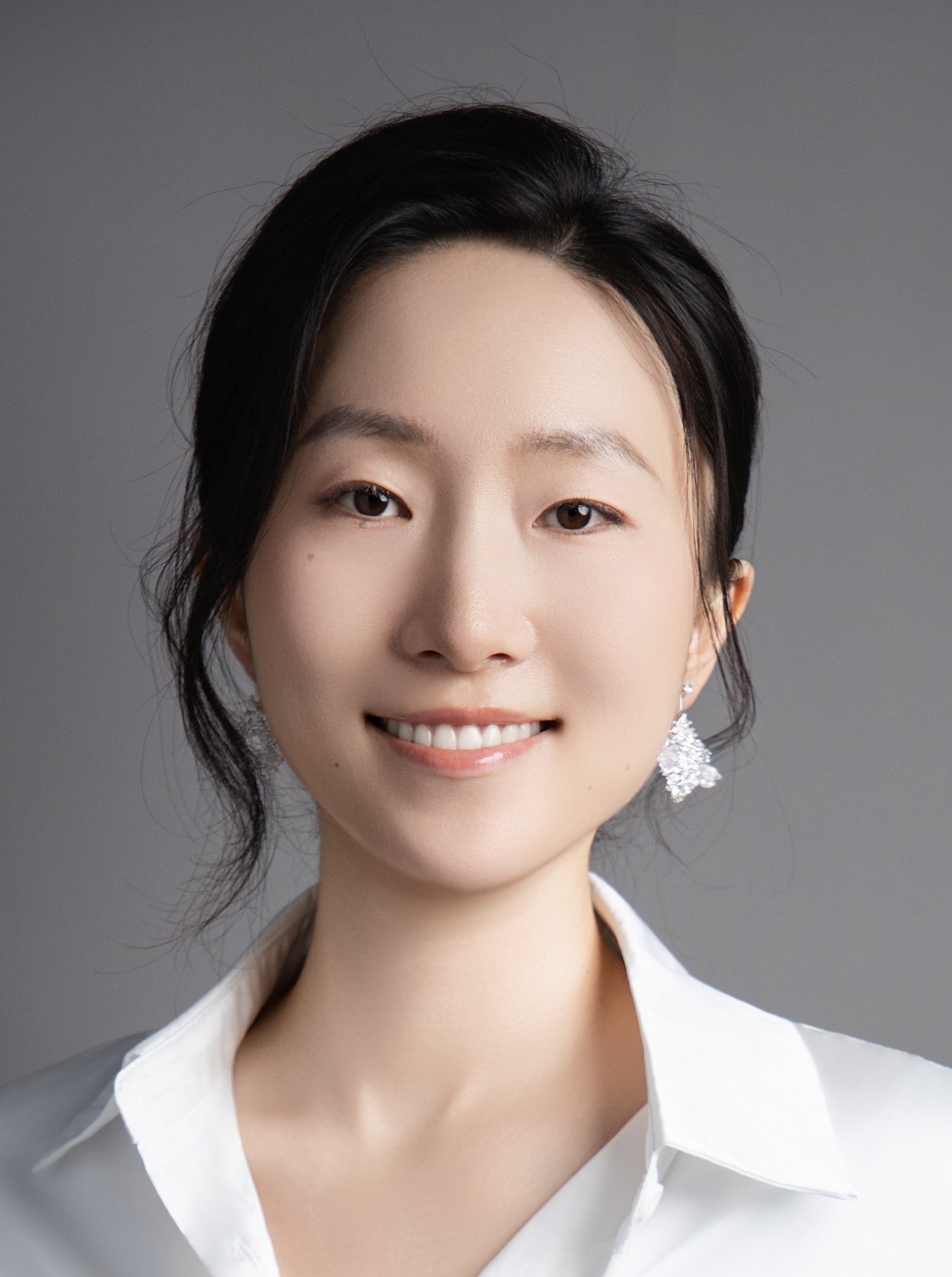}}]{Yorie Nakahira} is an Assistant Professor in the Department of Electrical and Computer Engineering at Carnegie Mellon University. She received B.E. in Control and Systems Engineering from Tokyo Institute of Technology in 2012 and Ph.D. in Control and Dynamical Systems from California Institute of Technology in 2019. Her research interests include the fundamental theory of optimization, control, and learning and its application to neuroscience, cell biology, smart grid, cloud computing, finance, autonomous robots.
\end{IEEEbiography}


\appendix
\subsection{TransFuser Details}

TransFuser~\cite{chitta2022transfuser} is a multi‑modal fusion Transformer for end‑to‑end autonomous driving that integrates synchronized RGB images and LiDAR bird’s‑eye‑view (BEV) data within a unified attention‑based framework. \rev{Trained using imitation learning on trajectories generated by a privileged autopilot across 8 CARLA towns, it leverages a large-scale dataset of paired image and LiDAR sensor recordings along with expert driving actions. TransFuser employs multiple transformer modules at different feature resolutions to fuse perspective‑view image features and BEV LiDAR maps via self‑attention, enabling capture of long-range dependencies. In CARLA's public benchmarks, it achieves state‑of‑the‑art driving scores, outperforming geometry‑based baselines with 76\% in collision rate reduction~\cite{prakash2021multi}.}

In our experiments, we take the publicly released pre-trained TransFuser model.\footnote{https://github.com/autonomousvision/transfuser}
We visualize the scenario and the corresponding processed sensor inputs for TransFuser in Fig.~\ref{fig:transfuser_sensor}. Despite the additional use of LiDAR, IMU and depth camera, TransFusers fails to ensure safety of the system, likely due to the distribution shift of the training and testing scenarios. 

\begin{figure*}[t]
    \centering
    \includegraphics[width=0.99\linewidth]{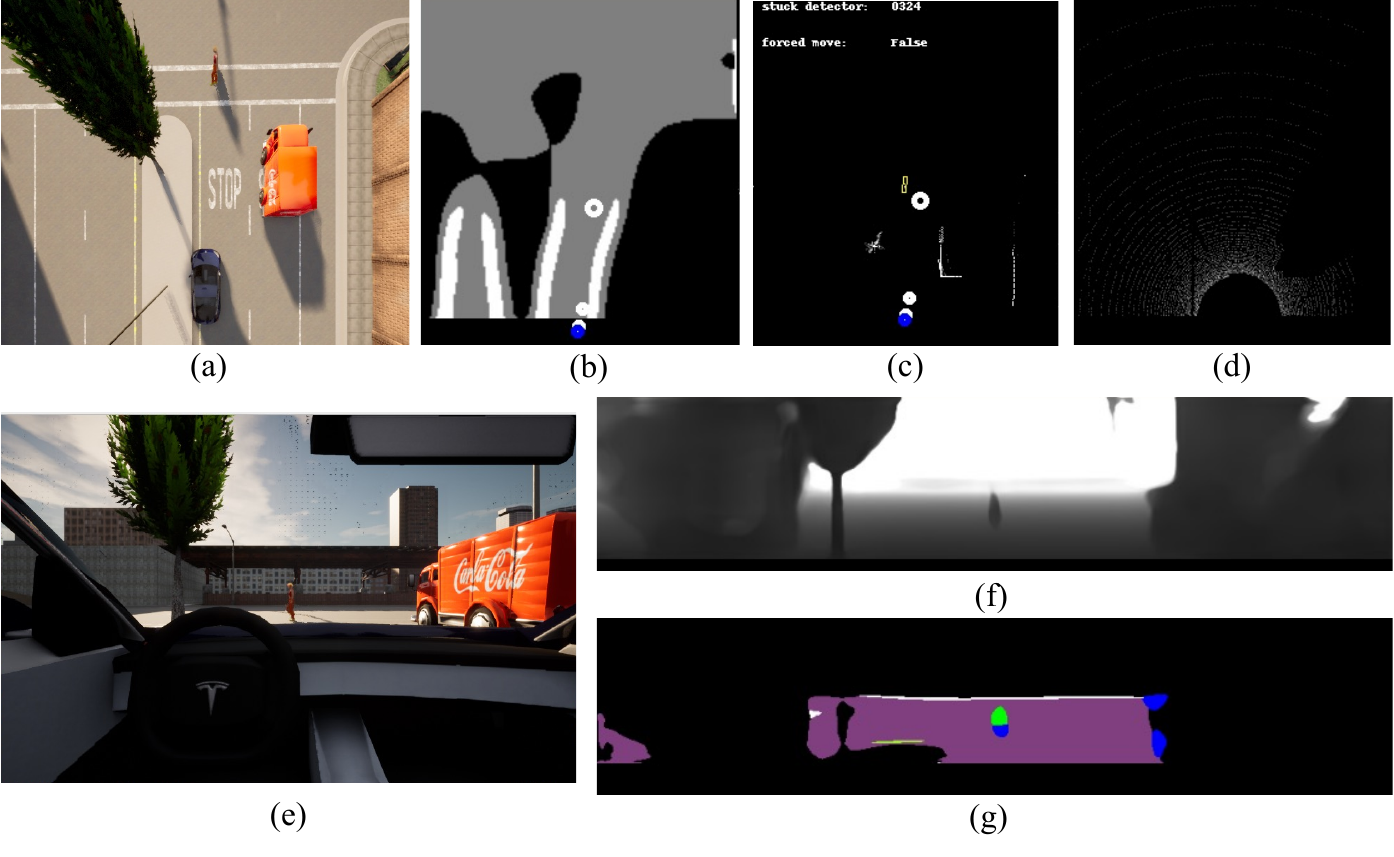}
    \caption{Visualization of the scenario and the corresponding processed sensor inputs for TransFuser~\cite{chitta2022transfuser}. (a) Birds' eye view of the scenario. (b) HD map prediction. (c) Obstacle prediction. (d) LiDAR ground panel. (e) First-person view from the ego vehicle. (f) Predicted depth. (g) Predicted semantics.}
    \label{fig:transfuser_sensor}
\end{figure*}


\subsection{Ablation Experiments}

\rev{We conduct ablation experiments on the effects of $\alpha$ for the proposed safe control scheme~\eqref{eqn:safe_cond}. We run the propose controller with different $\alpha$ with initial position $x_\text{init} = -120 \mathrm{m}$ and initial velocity $v_\text{init} = 0 \mathrm{m/s}$. Fig.~\ref{fig:alpha_ablation} shows the vehicle velocity over distance and Table~\ref{tab:alpha_ablation} summarizes the statistics. It can be seen that with a higher coefficient value for $\alpha$, the safe controller become more aggressive for probabilistic safety condition enforcement, resulting in higher safety probability and slightly more oscillating trajectories. Nevertheless, the choice of $\alpha$ does not greatly affect the performance and all cases achieve the desired safety probability.}

\rev{We also show how the proposed method accounts for different occurrence probabilities of the pedestrian. Specifically, we consider the scenario in Section~\ref{sec:carla_setup} and consider two additional pedestrian distributions:
\begin{enumerate}
\item 
Distribution 1: The distribution~\eqref{eq:ped_distribution_1} and~\eqref{eq:ped_distribution_2}.
\item 
Distribution 2: Waiting time $\Delta \tau_1$ for the first pedestrian:
\begin{equation}
    \Delta \tau_1 \sim \mathcal{N}(2.5,13),\quad \Delta \tau_1 \in [0,10],
\end{equation}
and time interval $\Delta \tau$ between all subsequent pedestrians:
\begin{equation}
    \Delta \tau \sim \mathcal{N}(2.5,13),\quad \Delta \tau \in [0,15].
\end{equation}
\item 
Distribution 3: No pedestrian coming out behind the occlusion, \ie $\Delta \tau_1 = \infty$.
\end{enumerate}
Fig.~\ref{fig:distribution_velocity} shows the velocity of the ego vehicle with the proposed safe control scheme, under different pedestrian occurrence probabilities listed above. Fig.~\ref{fig:distribution_control} shows the corresponding control actions. It can be seen that for different pedestrian occurrence probabilities, the proposed method yields different safe control actions, thus different behaviors of the ego vehicle.}

\begin{figure}[t]
    \centering
    \includegraphics[width=0.85\linewidth]{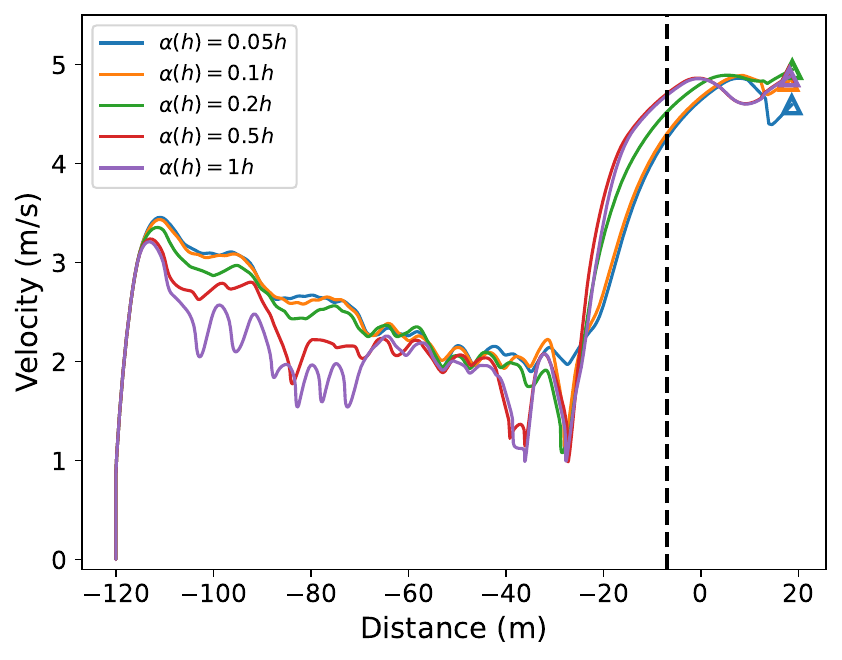}
    \caption{Vehicle velocities of the proposed safe controller with different $\alpha$. In all cases the vehicle safely pass through the intersection (indicated by the triangle).}
    \label{fig:alpha_ablation}
\end{figure}

\begin{table}
  \caption{Empirical safety probability and average traveling time with different $\alpha$. $\uparrow$ and $\downarrow$ indicate larger or smaller values preferred, respectively.}
  \label{tab:alpha_ablation}
  \centering
  {
  \small
  \begin{tabular}{c@{\hskip 1.5em}|@{\hskip 1.5em}c@{\hskip 2em}c}
    \hline
    $\alpha(h)$ & $P_{\text{safe}}$ $\uparrow$ &  $t \; (\mathrm{s})$ $\downarrow$ \\
    \hline
    $0.05h$ & 0.96 & 19.80 \\
    $0.10h$ & 1.00 & 20.95 \\
    $0.20h$ & 1.00 & 20.25 \\
    $0.50h$ & 1.00 & 22.05 \\
    $1.00h$ & 0.98 & 23.30 \\
    \hline
  \end{tabular}
  }
\end{table}

\begin{table}
  \caption{Empirical safety probabilities and traveling time. $\uparrow$ and $\downarrow$ indicate larger or smaller values preferred, respectively.}
  \label{tab:emp_safe_prob_full}
  \centering
  {
  \small
  \begin{tabular}{c|c|ccc}
    \hline
    Control Method & 
    Settings &
    $1-\epsilon$ & 
    $P_{\text{safe}}$ $\uparrow$  &
    $t \; (\mathrm{s})$ $\downarrow$\\
    \hline
    Proposed & & 0.95 & 1 & 25.36\\
    Worst-Case & & - & 1 & 46.98\\
    Planning-based & $x_{\text{init}}=-180\mathrm{m}$ & - & 0.94 & 32.98 \\
    OA-MPC         & $v_{\text{init}}=5\mathrm{m/s}$ & - & 0.86 & 41.60 \\
    TransFuser & & - & 0.58 & 31.8\\
    PID &  & - & 0.7 & 13.16\\
    \hline

    Proposed & & 0.9 & 0.98 & 26.94\\
    Worst-Case & & - & 1 & 55.49\\
    Planning-based & $x_{\text{init}}=-180\mathrm{m}$ & - & 0.96 & 34.64 \\
    OA-MPC         & $v_{\text{init}}=2\mathrm{m/s}$ & - & 0.94 & 38.29 \\
    TransFuser & & - & 0.53 & 33.76\\    
    PID &  & - & 1 & 32.68\\
    \hline

    Proposed & & 0.95 & 0.98 & 23.50\\
    Worst-Case & & - & 0.98 & 31.44\\
    Planning-based & $x_{\text{init}}=-120\mathrm{m}$ & - & 0.96 & 28.82 \\
    OA-MPC         & $v_{\text{init}}=6\mathrm{m/s}$ & - & 0.90 & 29.22 \\
    TransFuser & & - & 0.7 & 19.06\\
    PID &  & - & 0.74 & 9.34\\
    \hline

    Proposed & & 0.9 & 1 & 21.05\\
    Worst-Case & & - & 1 & 32.11\\
    Planning-based & $x_{\text{init}}=-120\mathrm{m}$ & - & 0.96 & 27.90 \\
    OA-MPC         & $v_{\text{init}}=3\mathrm{m/s}$ & - & 0.94 & 31.38 \\
    TransFuser & & - & 0.52 & 19.88\\
    PID &  & - & 0.66 & 18.50\\
    \hline

    Proposed & & 0.9 & 1 & 11.24\\
    Worst-Case & & - & 0.98 & 22.56\\
    Planning-based & $x_{\text{init}}=-60\mathrm{m}$ & - & 0.98 & 25.32 \\
    OA-MPC         & $v_{\text{init}}=2\mathrm{m/s}$ & - & 0.82 & 27.50 \\
    TransFuser & & - & 0.8 & 8.31\\
    PID &   & - & 0.9 & 14.94\\
    \hline
  \end{tabular}
  }
\end{table}

\begin{figure}
    \centering \includegraphics[width=0.85\linewidth]{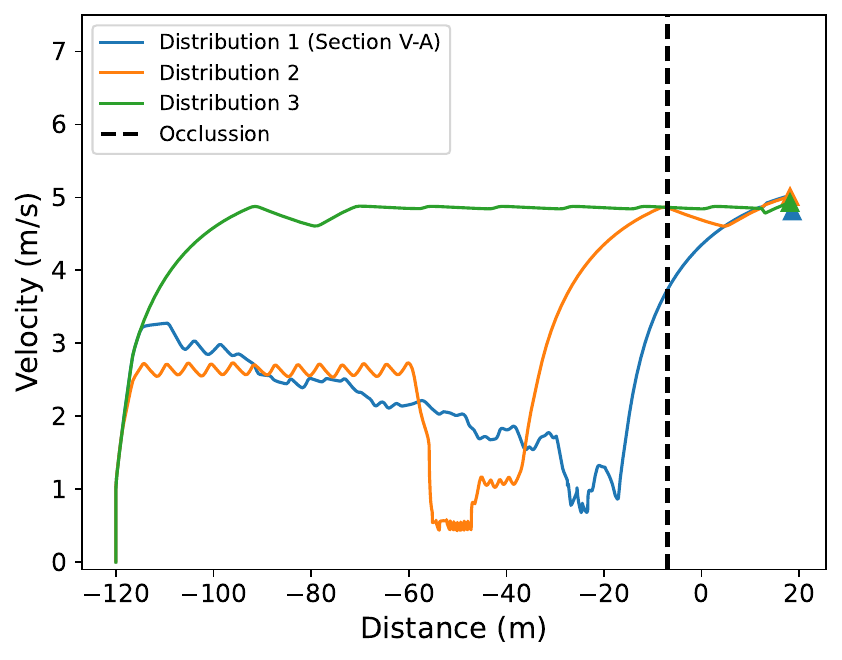}
    \caption{Ego vehicle velocity under different pedestrian occurrence probabilities. In all cases the vehicle safely pass through the intersection (indicated by the triangle).}
    \label{fig:distribution_velocity}
\end{figure}

\begin{figure}
    \centering \includegraphics[width=0.85\linewidth]{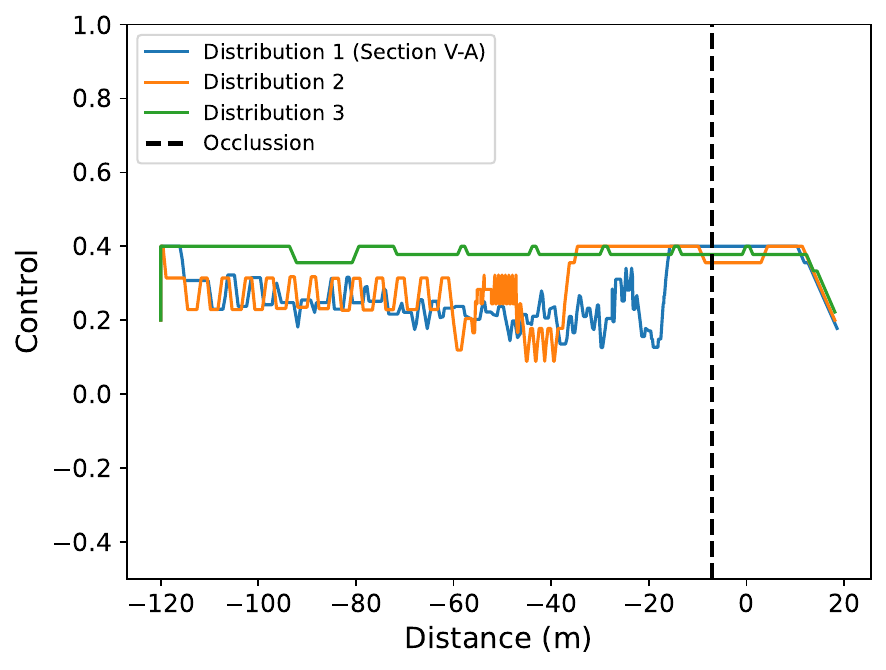}
    \caption{Safe control actions under different pedestrian occurrence probabilities.}
    \label{fig:distribution_control}
\end{figure}

\subsection{Supplementary Materials}
In the supplementary materials submitted along with the manuscript, we show videos of the testing results of the proposed methods and the baselines. 

It can be seen from the video that the PID controller tracks a fixed velocity throughout the course, until pedestrians are visible and emergency brake control takes over. However, due to the high velocity when the emergency control is activated, the vehicle is not able to come to a full stop before the intersection and results in collision. 

The worst-case controller is able to regulate the vehicle velocity according to the estimated risk. However, since the level of risk is not explicitly considered, the vehicle behavior is overly slow and conservative. Besides, this method is more oscillating as it brakes for a fixed amount of time to maintain safety without accounting for the level of violations.

For the data-driven method TransFuser, the vehicle tends to slow down whenever it sees trees or street lights. This is possibly because the pre-trained Transfuser model predicts such objects as obstacles and slows down the vehicle to account for safety. This kind of behaviors are unnecessary and reduce efficiency in our scenario. Besides, the TranFuser control results in collision at the intersection, possibly due to the distribution shift of the training scenario and the testing scenario.

At last, we show the proposed controller can regulate the vehicle's velocity according to the level of risk, by decelerating over time as the vehicle approaches the intersection, which results in safe trajectories with high efficiency.

\end{document}